\documentclass[letterpaper,11pt]{article}
\pdfoutput=1
\usepackage{bm} %bold math
\usepackage[table,xcdraw,svgnames, dvipsnames]{xcolor}
\usepackage{microtype}
\usepackage[utf8]{inputenc}
\usepackage{graphicx,fullpage,paralist}
\usepackage{amsmath, amssymb, amsthm}
\usepackage{comment,hyperref}
\usepackage{thmtools}
\usepackage{tikz}
\usepackage{pgfplots}
\usepackage{varwidth}
\pgfplotsset{compat=1.10}
\usepgfplotslibrary{fillbetween}
\usetikzlibrary{backgrounds}
\usetikzlibrary{patterns}
\usetikzlibrary{positioning}
\usepackage{nicefrac}
\usepackage{enumitem}
\usepackage{geometry}
\geometry{margin=1in}
\usepackage{array}
\usepackage{multirow}
\usepackage{stmaryrd}
\usepackage[font=small]{caption}
\usepackage{makecell}
% other packages
\usepackage{bbold}
\usepackage{cite}
\usepackage{booktabs}
\usepackage{mathtools}
\usepackage{thm-restate}
\usepackage{titling}
\usepackage{cleveref}
\usepackage[rightcaption]{sidecap} % Captions appear nex to figure.
\usepackage[skip=2pt, indent=20pt]{parskip}

% Special commands

\newcommand{\calM}{\mathcal{M}}

\newcommand{\cL}{\mathcal{L}}

\newcommand{\calE}{\mathcal{E}}
\newcommand{\calI}{\mathcal{I}}

\newcommand{\RR}{\mathbb{R}}
\newcommand{\NN}{\mathbb{N}}
\newcommand{\N}{\mathbb{N}}

%%%%%%%%%%%%%%%%%%%

%%%%%%%%%%%%%%%%%%%%
% Jose's commands
\newcommand{\Bin}{\textrm{Bin}}
\newcommand*\diff{\mathop{}\!\mathrm{d}}

\newcommand{\Gen}{\normalfont\textrm{Gen}}

\DeclareMathOperator{\AUX}{\normalfont{AUX}}
\DeclareMathOperator{\OPT}{\normalfont{OPT}}
\DeclareMathOperator{\GRE}{\normalfont{ALG_0}}
\DeclareMathOperator{\PAR}{\normalfont{ALG_1}}
\DeclareMathOperator{\ALG}{\normalfont{ALG}}

\DeclareMathOperator*{\argmin}{arg\,min}

\DeclareMathOperator*{\Para}{Par}
% ---- Delimiter Pairs ----

		% use instead of $|x|$ 
	% use instead of $\|x\|$ 
			% indicator variable; Iverson notation

% --- Self-scaling delimiter pairs ---

\def\set#1{\left\{ #1 \right\}}
\def\abs#1{\left| #1 \right|}

\def\prn#1{\left( #1 \right)}
\def\brk#1{\left[ #1 \right]}
		% indicator variable

\def\midd{\:\middle|\:}
\def\eps{\varepsilon}
\def\cM{\mathcal{M}}
\def\cF{\mathcal{F}}

\def\cL{\mathcal{L}}
\def\cI{\mathcal{I}}
\def\cC{\mathcal{C}}

\def\argmin{\operatornamewithlimits{arg\,min}}

% Environments
\theoremstyle{plain}
\newtheorem{theorem}{Theorem}
\newtheorem{lemma}{Lemma}

\newtheorem{proposition}[theorem]{Proposition}
\newtheorem{claim}{Claim}
{\bfseries}{\itshape} % For numbered claims.
\newtheorem*{objective*}{Objective}

\theoremstyle{definition}
\newtheorem{definition}{Definition}

\newtheorem{assumption}{Assumption}

% Comments
%\setlength{\marginparwidth}{1in}
\usepackage[textsize=small,textwidth=2cm]{todonotes}

%Algorithms
\usepackage{multirow}
\usepackage[noend]{algpseudocode}
\usepackage{algorithmicx,algorithm}

\newlength{\algofontsize}
\setlength{\algofontsize}{6pt}
\hypersetup{
	colorlinks,
	linkcolor={red!50!black},
	citecolor={blue!50!black},
	urlcolor={blue!80!black}
}
\begin{document}
\algrenewcommand\algorithmicrequire{\textbf{Input:}}
\algrenewcommand\algorithmicensure{\textbf{Output:}}
	
\title{Matroid Secretary via Labeling Schemes
\vspace{.7cm}
}

\thanksmarkseries{alph}
\author{Krist\'{o}f B\'{e}rczi\thanks{MTA-ELTE Matroid Optimization Research Group and HUN-REN--ELTE Egerv\'{a}ry Research Group, \\Department of Operations Research, E\"{o}tv\"{o}s Lor\'{a}nd University, Budapest, Hungary {\tt (kristof.berczi@ttk.elte.hu)}}
\and Vasilis Livanos\thanks{Center for Mathematical Modeling, Universidad de Chile, Santiago, Chile. {\tt (vas.livanos@gmail.cl)}}
\and Jos\'{e} A.~Soto\thanks{Department of Mathematical Engineering and Center for Mathematical Modeling, Universidad de Chile,\\ Santiago, Chile {\tt (jsoto@dim.uchile.cl)}}
\and Victor Verdugo\thanks{Institute for Mathematical and Computational Engineering, and Department of Industrial and Systems \\Engineering, Pontificia Universidad Católica de Chile, Santiago, Chile {\tt (victor.verdugo@uc.cl)}}
}

\date{}

\maketitle
\begin{abstract}
The matroid secretary problem (MSP) is one of the most prominent settings for online resource allocation and optimal stopping. A decision-maker is presented with a ground set of elements $E$ revealed sequentially and in random order.
Upon arrival, an irrevocable decision is made in a take-it-or-leave-it fashion, subject to a feasibility constraint on the set of selected elements captured by a matroid defined over $E$. The decision-maker only has ordinal access to compare the elements, and the goal is to design an algorithm that selects every element of the optimal basis with probability at least $\alpha$ (i.e., $\alpha$-probability-competitive). While the existence of a constant probability-competitive algorithm for MSP remains a major open question, simple greedy policies are at the core of state-of-the-art algorithms for several matroid classes.

We introduce a flexible and general algorithmic framework to analyze greedy-like algorithms for MSP based on constructing a language associated with the matroid. Via this language, we establish a lower bound on the probability-competitiveness of the algorithm by studying a corresponding Poisson point process that governs the words' distribution in the language. Using our framework, we break the state-of-the-art guarantee for laminar matroids by settling the probability-competitiveness of the greedy-improving algorithm to be exactly $1-\ln(2)\approx 0.3068$. We also showcase the capabilities of our framework in graphic matroids, to show a probability-competitiveness of $0.2693$ for simple graphs and $0.2504$ for general graphs.
\end{abstract}

\newpage
\thispagestyle{empty}

\newpage
\setcounter{page}{1}

%\input{sections/introduction.tex}
%\input{sections/preliminaries.tex}
%\input{sections/labeling.tex}
%\input{sections/laminar.tex}
%\input{sections/graphic.tex}
%\input{sections/acknowledgement.tex}

%%%%%%%%%%%%%%%%
\section{Introduction}
\label{sec:intro}
%%%%%%%%%%%%%%%%

One of the most celebrated problems in online decision-making is the \emph{secretary problem}, which aims to hire the best out of $n$ candidates who arrive in a uniformly random order. The decision-maker can only rank the candidates observed up to that point and, upon observing a candidate, must make an immediate and irrevocable decision to accept or reject them. 
The famous optimal strategy of rejecting the first $1/e$ fraction of candidates and then selecting the first one who is better than all previously seen candidates guarantees at least a $1/e$ probability of hiring the best candidate (see, e.g., Lindley~\cite{lindley}, Dynkin~\cite{dynkin}).

Arguably, one of the most interesting generalizations is the \emph{matroid secretary problem} (MSP) introduced by Babaioff et al.~\cite{mat-sec}. In the MSP, the decision-maker is given a matroid and can hire any set of candidates, forming an independent set in the matroid. An algorithm that, for a fixed optimal basis, selects a set in which every candidate from the optimal basis appears with probability at least $\alpha$ is called \emph{$\alpha$-probability-competitive}. For the remainder of this paper, we assume, without loss of generality, that a unique optimal basis exists.
When the decision-maker observes a weight associated with each candidate and the selected set weights at least an $\alpha$ fraction of the optimal weight basis, the algorithm is \emph{$\alpha$-utility-competitive}. Over the last two decades, there has been a series of works studying the competitiveness for general matroids, resulting in a $O(\log{\log{r}})$ utility-competitive and $O(\log r)$ probability-competitive algorithms~\cite{mat-sec,sqrt-log-r-matroid-sec,loglogrank-matroid-sec1,loglogrank-matroid-sec2,forbidden-paper} for rank $r$ matroids. Babaioff et al.~\cite{mat-sec} conjectured the existence of a (utility) constant-competitive algorithm for every matroid. The \emph{matroid secretary conjecture}, as it has been known since, has been resolved for several sub-classes of matroids~\cite{forbidden-paper,dinitz-secretary,KesselheimRTV13,korula-pal-graphic,soto-secretary,ImW11}, but it remains open in general and even for fundamental sub-classes such as gammoids or binary matroids. 

%%%%%%%%%%%%%%%%
\subsection{Our Contributions and Techniques}
%%%%%%%%%%%%%%%%

In this work, we develop new tools for analyzing algorithms for the MSP that are more effective at managing the complex correlations involved in determining whether an element arriving at a specific time can be selected. Conceptually, our main contribution is introducing a new framework to analyze greedy-improving algorithms for the MSP using labeling schemes. Using this, we improve the state-of-the-art probability-competitiveness for the fundamental laminar and graphic matroid classes and match the best-known probability-competitiveness guarantees for several other matroids since, as we discuss in Appendix~\ref{app:qforbidden}, our labeling scheme framework extends the capabilities of the forbidden sets technique by Soto, Turkieltaub, and Verdugo~\cite{forbidden-paper}.

\paragraph{\bf The Labeling Scheme Framework.}The algorithms we analyze initially skip the first $p$ fraction of elements. Afterward, they mark every element that improves upon the current best set, i.e., an \emph{improving element}. Some marked elements will be added to the output $\ALG$ when they arrive, according to an internal property $P$ specific to the matroid. Property $P$ filters the improving elements and guarantees the feasibility of the resulting selection, i.e., if $P$ holds when an element $e$ arrives, then it must be that $\ALG + e$ is independent. To analyze the algorithm, we consider a separate procedure, called a \emph{labeling scheme} that visits the marked elements in reverse order and assigns a label to each. At the end of the procedure, we obtain a word $z$ by concatenating all labels assigned by our scheme. 

The core idea is that, for every matroid $\cM$, we can assign a language $\cL(\cM)$ such that, for any fixed element $e^*$ in the optimal set,
$\Pr[e^* \in \ALG] \geq \Pr[z \in \cL(\cM)]$,
where $\ALG$ is the final set of our algorithm. Then, the competitiveness guarantee follows by computing $\Pr[z \in \cL(\cM)]$, and we show that the uniformly random arrival order of the elements implies the characters (i.e., the labels) of $z$ are chosen uniformly at random from the label set and the length of $z$ follows a Poisson distribution.
In \Cref{sec:label} we introduce our framework and the stochastic analysis tools used throughout this work. 

\paragraph{\bf Laminar Matroids.} One of the first algorithms proposed for the MSP was the \emph{greedy-improving} algorithm, which greedily selects every improving element, subject to feasibility. Despite its simplicity, it has been remarkably successful for many matroid classes. For laminar matroids, the best-known guarantee before our work was the very recent $\approx 0.2105$-probability-competitiveness, achieved by Huang, Parsaeian, and Zhu~\cite{zahra-laminar-secretary} using the greedy-improving algorithm. Using our labeling scheme framework, we settle the probability-competitiveness of the greedy-improving algorithm for laminar matroids and show a tight guarantee of $1-\ln(2)\approx 0.3068$ (Theorem \ref{thm:laminar-intro}). In fact, for every natural number $r$, we characterize the optimal competitiveness attained by the greedy-improving algorithm on laminar matroids of rank $r$. The competitiveness decreases as $r \to \infty$, with a limit of $1 - \ln(2)$. We also obtain a full characterization of the greedy-improving algorithm on uniform matroids for any fixed rank $r$ and show that it grows from $1/e$ when $r=1$ to $1-1/e$ as $r\to \infty$ (Proposition \ref{lem:gi}).
In particular, we show that the probability-competitiveness of greedy-improving for $r=2$ outperforms the approximation achievable by the so-called {\it optimistic} algorithm introduced by Babaioff et al.~\cite{babaioff2007knapsack} and recently fully analyzed for the case of $r=2$ by Albers and Ladewig \cite{albers2021new}.

Towards an answer for the matroid secretary conjecture, we study the simplest matroids for which the existence of a $1/e$-probability-competitive algorithm is still open: rank-$2$ matroids. As they are laminar matroids, we can specialize our results to this class, obtaining a probability-competitiveness of 0.3341.  
We show one can go beyond this via an alternative mixed algorithm to get a $0.3462$-probability-competitive algorithm for the MSP on rank-2 matroids (\Cref{thm:rank-2-intro}). The analysis of laminar MSP via our labeling schemes framework can be found in \Cref{sec:laminar}.

\paragraph{\bf Graphic Matroids.} In the graphic case, we adjust our labeling scheme to account for an algorithm that is not greedy-improving but avoids selecting edges that seem like they will induce cycles in the final set. The previous best algorithm in \cite{forbidden-paper} is $1/4$-probability-competitive as it satisfies the $2$-forbidden property. 
We showcase the capabilities of our framework to show that one can also improve the 2-forbidden guarantee in this case.

We begin with a basic algorithm that orients the improving edges and greedily constructs an auxiliary digraph $\AUX$ with a maximum in-degree of one. Consequently, $\AUX$ has at most one cycle per connected component. The algorithm in \cite{forbidden-paper} selects a subset of $\AUX$ that includes only edges connecting nodes with in-degree 0 when they appear. Hence, the resulting set does not contain the last edge of any cycle and is always feasible. However, this procedure may drop too many arcs from $\AUX$. Our labeling schemes allow a more careful study of which edges we must delete to ensure that the graph is acyclic. For that, we assign a \emph{non-negative generation} to each arc in $\AUX$ according to the order in which arcs were added. The basic algorithm outputs only edges of generation 0. We show that, in fact, by keeping edges with generation different from 1, we also obtain an acyclic graph. This new \emph{generation algorithm} is $0.2693$-probability-competitive for simple graphs (i.e., without parallel edges) -- note that even for simple graphs, the previous algorithm by \cite{forbidden-paper} is still $1/4$-probability-competitive. By randomizing between the generation algorithm and an oblivious algorithm that is good for those optimal edges $e^*$ that are likely to belong to some 2-cycles in $\AUX$, we obtain our final algorithm for graphic matroids that achieves a $0.2504$-probability-competitiveness for general graphs (Theorem~\ref{thm:graphic-intro}). The analysis of graphic MSP via our labeling schemes framework can be found in \Cref{sec:graphic}.

\subsection{Related Work}
%%%%%%%%%%%%%%%%

The MSP was introduced by Babaioff et al.~\cite{mat-sec}, in which the authors gave a $\Omega({1/\log{r}})$-utility-competitive algorithm for general matroids. Later, Chakraborty and Lachish \cite{sqrt-log-r-matroid-sec} gave an improved $\Omega({1/\sqrt{\log{r}}})$-utility-competitive algorithm, before the current best bound of $\Omega({1/\log{\log{r}}})$ obtained first by Lachish \cite{loglogrank-matroid-sec1} and then by Feldman, Svensson, and Zenklusen \cite{loglogrank-matroid-sec2}. To the best of our knowledge, the only probability-competitive algorithms for general matroids are the $\Omega({1/\log^2{r}})$-competitive algorithm by Bateni, Hajiaghayi, and Zadimoghaddam \cite{bateni-secretary} and a $\Omega({1/\log{r}})$-competitive algorithm by Soto, Turkieltaub, and Verdugo~\cite{forbidden-paper}.

The matroid secretary conjecture has been shown to hold for the \emph{random assignment and random order} model by Soto \cite{soto-secretary}, a setting with a slightly weaker adversary in which the weights of the elements are still chosen adversarially, but the assignment of weights to elements is done uniformly at random. Oveis-Gharan and Vondrák \cite{GharanV13}
extend this result to the setting where the arrival order of the elements is adversarial instead of uniformly random. 
Santiago, Sergeev, and Zenklusen \cite{SantiagoSZ23} circumvent this problem by obtaining an algorithm that only requires access to an independence oracle in the random-assignment random-order model. Recently, Cristi et al. \cite{matroid-embeddings} gave reductions from the unknown-matroid to the known-matroid variants of MSP for several sub-classes of matroids and showed an impossibility result for such a reduction in the case of a general matroid.
A series of works by Dughmi \cite{dughmi1,dughmi2} uses duality to establish an equivalence between the MSP conjecture and the existence of random-order contention resolution schemes. 
Babaioff et al.~\cite{babaioff-secretary-2} unified several results via the $\alpha$-partition property technique.\\ 
%Soto, Turkieltaub, and Verdugo \cite{forbidden-paper} introduced the forbidden-sets technique and obtained many of the current best competitive ratios for several sub-classes of matroids. \\

\noindent{\it Laminar Matroids.}
The initial $3/16000$ guarantee of Im and Wang \cite{ImW11} was beaten by Jaillet, Soto, and Zenklusen, based on the $\alpha$-partition property, that is $1/(3 e \sqrt{3}) \approx 0.0707$-competitive \cite{free-order-secretary}. Later, Ma, Tang, and Wang \cite{ma-tang-secretary} showed that the greedy-improving algorithm is $0.1041$-competitive, a guarantee that was later improved by Soto, Turkieltaub, and Verdugo \cite{forbidden-paper} using the forbidden sets technique to show that laminar matroids are $3$-forbidden, yielding a $0.1924$-probability-competitive algorithm. 
Prior to our work, the state-of-the-art bound was the $0.2105$-competitiveness for all laminar matroids shown by Huang, Parsaeian, and Zhu \cite{zahra-laminar-secretary}.\\

\noindent{\it Graphic Matroids.}
The initial paper on the MSP by Babaioff et al.~\cite{mat-sec} showed a $1/16$-competitive algorithm for graphic matroids, and this bound was improved to $1/(3e)$ by Babaioff et al.~\cite{babaioff-secretary-2}. Both of these results rely on the $\alpha$-partition property. This approach was followed by Korula and Pal \cite{korula-pal-graphic}, who showed that graphic matroids have the $2$-partition property, improving the bound to $1/(2e)$ using a simple algorithm. Finally, Soto et al. \cite{forbidden-paper} showed that graphic matroids are $2$-forbidden, yielding a $1/4$-probability-competitive algorithm.
Concurrently with our work, and using a different approach, Banihashem et al.~\cite{banihashem2025beating} also show how to surpass $0.25$ for graphic matroids, to get $0.2652$ for simple graphs, and a $0.2531$ factor for general graphs.\\

\noindent{\it Impossibility Results.} Babaioff et al.~\cite{mat-sec} showed that the greedy-improving algorithm alone cannot be constant-competitive for graphic matroids. Then, Bahrani et al.~\cite{BahraniBSW21} generalized this result, showing that this is also the case for a bigger class of greedy-like algorithms and for algorithms that partition the ground set without looking at the weights, and then work on each part separately. 
Recently, Abdolazimi et al.~\cite{matroid-partition} showed that algorithms based on the $\alpha$-partition property cannot be constant-competitive for the full binary matroid, even if the partition obtained can depend on the initial set of sampled elements.
%%%%%%%%%%%%%%%%
\section{Preliminaries}\label{sec:prelim}

Given a ground set $E$ together with a subset $F\subseteq E$ and $e\in E$, the sets $F\setminus \{e\}$ and $F\cup\{e\}$ are abbreviated as $F-e$ and $F+e$, respectively.  
We denote the reals, non-negative reals, and non-negative integers by $\RR$, $\RR_+$, and $\NN$, respectively. 
For $n\in \NN$, we use $[n]$ to denote the set $\{1,\dots, n\}$. For a set $S\subseteq \NN$, we denote by $S^*$ the set of finite sequences whose entries are elements of $S$ and refer to these sequences as {\it words}. 
The set of words of length $k$ is denoted by $S^k$. 
We use $\epsilon$ to denote the empty word and $xy$ to denote the {\it concatenation} of two words $x$ and $y$. 
We also make use of standard shorthand notation for regular languages such as $w=1^a2^b$ to denote the word $11\dots1 22\dots2$ with $a$ many $1$s and $b$ many $2$s, or $w \in 1^*2$ to denote that $w \in \{1^k2\colon k\in \NN\}$.\\
\begin{comment}
Let $D = (V,A)$ be a {\it directed graph} with vertex set $V$ and arc set $A$. 
We denote an arc $a$ oriented from $u$ to $v$ by $(u,v)$ (or sometimes $uv$ for brevity), where $u$ and $v$ are called the {\it tail} and the {\it head} of $a$, respectively. 
An arc $uv$ {\it enters} a subset $Z$ of vertices if $v \in Z$, $u \notin Z$ and {\it exits} $Z$ if $u \in Z$, $v \notin Z$. For a subset $F \subseteq A$ of arcs, the {\it in-degree} of $Z$ is the number of arcs entering it and is denoted by $\deg^-_F(Z)$. The subscript $F$ is dismissed when $F$ consists of the whole arc set. 
\end{comment}

\noindent{\bf Matroids.} We provide a few basic definitions for matroids and refer the reader to~\cite{oxley2011matroid} for an extensive treatment. 
A {\it matroid} $\calM = (E, \calI)$ is defined by its {\it ground set} $E$ and its {\it family of independent sets} $\calI\subseteq 2^E$ that satisfies the so-called {\it independence axioms}: (I1) $\emptyset\in\calI$, (I2) $X\subseteq Y,\ Y\in\calI\Rightarrow X\in\calI$, and (I3) $X,Y\in\calI,\ |X|<|Y|\Rightarrow\exists e\in Y-X\ \text{such that } X+e\in\calI$. 
For $X\subseteq E$, the maximum size of an independent set in $X$ is called the {\it rank} of $X$. 
A non-independent set is called {\it dependent}, and a {\it circuit} is an inclusion-wise minimal dependent set. 
Two elements $e,f\in E$ are {\it parallel} if they form a circuit of size two. 
The independence axioms imply that any two maximal independent sets in $E$ have the same size, called the {\it rank of the matroid}, usually denoted by $r_\cM$. 

The {\it rank function} of the matroid is a function $r_\cM: 2^E \to \NN$, where $r_\cM(S)$ denotes the size of a maximal independent set contained in $S$. In both cases, the subscript is dropped if the matroid is clear from context. The {\it bases} of the matroid are its maximal independent sets. An element $e\in E$ is a {\it loop} if $\{e\}$ is a circuit and a {\it coloop} if $e$ is contained in every base. Throughout the paper, we use $n$ and $r$ to denote the size of the ground set and the total rank of the matroid, respectively, i.e., $n = |E|$, and $r=r_\cM(E)$.\\

\noindent{\bf Matroid Secretary Problem (MSP).}
In the MSP, one is given a matroid $\mathcal{M} = (E, \mathcal{I})$ and is presented with the elements of $E$ in an online manner in a uniformly random order. Upon observing an element $e \in E$, one must decide immediately and irrevocably whether to select $e$ or reject it and continue to the next element, subject to the constraint that the selected set $S$ must be independent in $\mathcal{M}$, i.e., $S \in \mathcal{I}$. It is assumed throughout that the matroid description is given upfront to the algorithm and that the matroid is loopless, as loops can be skipped during the online process.
%In the {\it utility MSP}, an injective weight function $w: E \to \mathbb{R}_+$ is given, which is not known a priori but is revealed upon the arrival of each element. The objective is to select a set $S$ that maximizes the total weight $w(S) = \sum_{e \in S} w(e)$. An algorithm $\ALG$ selecting a (random) set $S$ is said to be {\it $\alpha$-utility-competitive}, with $\alpha \in [0,1]$, if $\mathbb{E}[w(S)] \geq \alpha \cdot\OPT(E)$, where $\mathrm{OPT}(E)$ denotes the maximum weight of an independent set in $\mathcal{M}$ with respect to $w$.

In this work, we consider the general ordinal setting for MSP, i.e., instead of a weight function, a total order is given on the elements of the matroid and denoted by $\succ$. 
Note that, by the greedy algorithm, any subset $S \subseteq E$ admits a unique independent set that is lexicographically maximal with respect to $\succ$, which we denote by $\OPT(S)$. Upon the arrival of a new element $e$, its relative rank is revealed by comparing it with the elements that have already arrived. 
An algorithm $\ALG$ selecting a (random) set $S$ is said to be {\it $\alpha$-probability-competitive} if every element of $\OPT(E)$, the lexicographically maximal independent set of $E$ with respect to $\succ$, is included in $S$ with probability at least $\alpha$. The elements in $\OPT(E)$ are referred to as \textit{optimal elements}.

%%%%%%%%%%%%%%%%
\section{The Labeling Scheme Framework}
\label{sec:label}
%%%%%%%%%%%%%%%%

%%%%%%%%%%%%%%%
\subsection{Improving Elements and Times}
\label{sec:improving_defs}
%%%%%%%%%%%%%%%
In this section, we describe our new approach to the matroid secretary problem using labeling schemes.
Given a ground set $E$ together with a subset $F\subseteq E$ and $e\in E$, the sets $F\setminus \{e\}$ and $F\cup\{e\}$ are abbreviated as $F-e$ and $F+e$, respectively.
We use $\cM=(E,\cI)$ to denote the matroid and $r$ to indicate its rank. The elements of $E$ can be compared (there is an underlying total order), or equivalently, we assume that all the elements have different weights. For a subset $X \subseteq E$, we denote by $\OPT(X)$ the unique optimal basis of $X$ and refer to the elements in $\OPT(E)$ as {\it optimal elements}.

An equivalent way to model a uniformly random arrival order is to assume that every element $e \in E$  picks an arrival time $t_e$ from the uniform distribution over $[0,1]$, and the algorithm observes elements one by one in increasing arrival time. Although arrival times are not part of the input, our algorithms for matroids with $n$
elements can simulate this process by first generating 
$n$
independent, uniformly distributed arrival times in 
$[0,1]$. These \emph{arrival times} are sorted as $s_1 < \dots < s_n$ and assigned to elements as they are being presented; if $e$ is the $i$-th arriving element, then $t_e = s_i$.
For every positive $t \in (0,1]$, let $E_t$ denote the set of elements arriving in the interval $[0,t)$ and $E_t^+$ the set of elements arriving in the interval $[0,t]$. 

\begin{definition}[Improving Elements and Improving Times]
An element $e \in E$ is improving if $e\in \OPT(E_{t_e}^+)=\OPT(E_{t_e}+e)$. The times when improving elements arrive are the {\it improving times}.
\end{definition}

Contrary to arrival times, we usually assume they are sorted in decreasing order, i.e., for a set of improving times $\set{t_1, t_2, \dots, t_k}$, we have $t_k < t_{k-1} < \dots < t_1$. For $0 \leq a < b \leq 1$, we denote the {\it last improving time} in increasing order in $[0,b)$ by $S(b)$, and the {\it number of improving times} in $[a,b)$ or $[a,b]$ by $N[a,b)$ and $N[a,b]$, respectively.
The following is a folklore lemma for matroids (see, e.g.,~\cite[Lemma 2]{forbidden-paper}). 
It allows us to restrict our analysis to algorithms for the matroid secretary problem that always skip non-improving elements.
\begin{lemma}\label{lem:improving-elems}
If $X\subseteq Y$ and $e\in \OPT(Y+e)$, then $e\in \OPT(X+e)$. In particular, every optimal element $e \in \OPT(E)$ is improving.
\end{lemma}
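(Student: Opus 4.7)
The plan is to translate the statement about $\OPT$ into the language of the matroid rank function and then invoke submodularity. For any set $S$ and element $e$, write $A_e(S) := \{f \in S : f \succ e\}$ for the elements of $S$ strictly above $e$ in the value order. The key identity, which I would verify first, is
\[
e \in \OPT(S+e) \iff r(A_e(S) + e) > r(A_e(S)).
\]
To see this, recall that $\OPT$ picks the lexicographically maximum independent set with respect to $\succ$, which can be computed by the standard matroid greedy: process elements in decreasing $\succ$-order and keep each one whose addition preserves independence. Applied to $S+e$, at the moment $e$ is processed the set of already-kept elements is a maximal independent subset of $A_e(S)$, and so $e$ is kept if and only if it is not spanned by this set, which in terms of rank means $r(A_e(S)+e) > r(A_e(S))$.

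With the identity in hand, the main statement follows immediately from submodularity of the rank function. If $X \subseteq Y$ then $A_e(X) \subseteq A_e(Y)$, and submodularity gives
\[
r(A_e(X)+e) - r(A_e(X)) \;\geq\; r(A_e(Y)+e) - r(A_e(Y)) \;\geq\; 1,
\]
where the last inequality uses the identity applied to $Y$ under the hypothesis $e \in \OPT(Y+e)$. Hence $r(A_e(X)+e) > r(A_e(X))$, and the identity applied to $X$ gives $e \in \OPT(X+e)$.

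For the ``in particular'' consequence, fix $e \in \OPT(E)$ and apply the first part with $Y = E - e$ and $X = E_{t_e}$. Then $X \subseteq Y$ and $Y + e = E$, so $e \in \OPT(X+e) = \OPT(E_{t_e}^+)$, which is precisely the definition of $e$ being improving. The only mildly nontrivial step is the greedy characterization of $\OPT$ in the identity above; after that, the proof reduces to a single invocation of submodularity of the matroid rank function, with no obstacle of substance.
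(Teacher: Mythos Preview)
Your proof is correct. The identity $e \in \OPT(S+e) \iff r(A_e(S)+e) > r(A_e(S))$ follows from the greedy construction of $\OPT$ exactly as you describe, and once it is in place the main claim is a one-line consequence of the diminishing-returns property of the matroid rank function; the ``in particular'' clause is then the specialization $Y = E - e$, $X = E_{t_e}$, which works because $e \notin E_{t_e}$.

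As for comparison with the paper: the paper does not actually prove this lemma, it only cites \cite[Lemma~2]{forbidden-paper} and calls the result folklore. Your argument via submodularity is the standard route and is essentially what one finds in that reference (the characterization of greedy acceptance in terms of marginal rank increase, followed by monotonicity of marginals). So there is nothing to contrast; your write-up is a complete and self-contained proof of a statement the paper chose to outsource.
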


Our algorithms skip all elements with arrival time before a certain parameter $p \in (0,1)$, which has the same effect as choosing a value $n_0\sim \Bin(n,p)$ from the binomial distribution of parameters $n$ and $p$ and then skipping the first $n_0$ elements. Among the elements that arrive after time $p$, our algorithms skip all non-improving elements. In particular, the output of all our algorithms will be a subset of the improving elements arriving in the interval $[p,1]$.
The following assumption is unnecessary for our algorithms, but it simplifies their analysis.

\begin{assumption}\label{asm:copies}
The input matroid is augmented in such a way that, at every time $t$, the set $E_t$ has full rank. Every new element added in this way has an infinitesimal weight compared to its original counterpart.
\end{assumption}
We remark that this is only used to analyze our algorithms, and we refer to the full version of this paper for more details about the construction. One of the main tools we use is the following lemma, which characterizes the distribution of the arrival times of improving elements. 
Its proof can be found in the full version of this paper.

\begin{lemma}\label{lem:Poisson-all}
Under Assumption~\ref{asm:copies}, the following hold:
\begin{enumerate}[itemsep=0em,label=\normalfont(\roman*)]

\item \label{it:1} For every $b\in\RR_+$, $S(b)$ has cumulative distribution function $\Pr\brk{S(b)\leq x}=0$ if $x<0$, 
$\Pr\brk{S(b)\leq x}=\prn{\frac{x}{b}}^r$ if $x\in [0,b)$, and  $\Pr\brk{S(b)\leq x}=0$ if $x\geq b$.

\item \label{it:2} For all $k\in\NN$, let $x_k = -\ln(y_k)$ where $y_0 = 1$ and $y_k = S(y_{k-1})$ otherwise.
\begin{enumerate}[label=\normalfont(\alph*)]\itemsep0em
 \item \label{it:a} The arrival times $x_k - x_{k-1}$ are distributed according to an exponential distribution with mean ${1}/{r}$. In particular, $\{x_k\}_{k\in \NN}$ is a Poisson point process restricted to the positive half-line with rate $r$. 
 \item \label{it:b} $N[a,b)$ distributes as a Poisson random variable with rate $\lambda = r \ln(b/a)$.
 \item \label{it:c} For any finite family of disjoint intervals $\{[a_i,b_i)\}_{i\in I}$, the random variables $\{N[a_i,b_i)\}_{i\in I}$ are mutually independent.
\end{enumerate}
\end{enumerate}
\end{lemma}

\begin{proof}
\ref{it:1}. 
We follow the approach of~\cite{zahra-laminar-secretary}. 
Recall that $E_b$ is the set of elements arriving in $[0,b)$. Conditioned on the set $E_b=X$, the last improving element in $[0,b)$ is a uniformly chosen random element of $\OPT(E_b)$. Therefore, its arrival time is $S(b)=\max\{t_e \colon e\in \OPT(X)\}$. First, notice that, by definition, $S(b)\in [0,b)$. Also, since the arrival times are independent,  
\[
\Pr\brk{S(b) \leq x} = \prod_{e\in \OPT(X)}\Pr(t_e\leq x) = \prn{\frac{x}{b}}^r,
\]
for every $x \in [0,b)$.
\medskip

\noindent \ref{it:2}. For condition~\ref{it:a}, notice that for $x \in [0, +\infty)$ we have
\begin{align*}
\Pr[x_k - x_{k-1} \leq x] &= \Pr[-\ln{y_k} + \ln{y_{k-1}} \leq x] = \Pr\left[\ln\prn{\frac{y_{k-1}}{y_k}} \leq x\right] \\
&= \Pr\left[\ln\prn{\frac{y_{k-1}}{S(y_{k-1})}} \leq x\right] = \Pr\left[S(y_{k-1}) \geq y_{k-1} e^{-x}\right] \\
&= 1 - \Pr\left[S(y_{k-1}) < y_{k-1} e^{-x}\right] = 1 - \prn{\frac{y_{k-1} e^{-x}}{y_{k-1}}}^r= 1 - e^{-xr}.
\end{align*}
In other words, for every positive $k\in\NN$, $x_k - x_{k-1}$ is an exponential random variable of rate $r$. We conclude that $\{x_k\}_{k\in \NN}$ is a Poisson point process with rate $r$ on the positive half-line.

Condition~\ref{it:b} holds since $N[a,b)$ counts the number of points in $\{y_k\}_{k\in \NN}$ that lie in $[a,b)$, or equivalently, the number of points in the Poisson process $\{x_k\}_{k\in \NN}$ that lie in $(-\ln(b),-\ln(a)]$, and this behaves like a Poisson variable of rate $r$ multiplied by the length of the interval $(-\ln(b),-\ln(a)]$, i.e., like a Poisson with rate $r \ln\prn{\frac{b}{a}}$.
Finally, condition~\ref{it:c} follows from the complete independence of the Poisson process.
\end{proof}
%%%%%%%%%%%%%%%%
\subsection{Labeling Schemes and Improving Words}
\label{sec:improving}
%%%%%%%%%%%%%%%%

Our main contribution is a new way of analyzing algorithms based on assigning labels from $[r]$ to the elements of $\OPT$ at improving times and then analyzing the {\it word} formed by the labels of the improving elements.

\begin{definition}[Labeling Scheme]
A {\it labeling scheme $\Lambda$ for an interval $[a,b]$} assigns, for each improving time $t \in [a,b]$, a unique label from $[r]$ to each element of the set $X = \OPT(E_t^+)$.
\end{definition}

It is worth emphasizing that the labels are assigned independently of the arrival order of $E_t$. Also note that an element's label may change from improving time to improving time and can depend on the set of improving elements and times in $(t,1]$. We often use the following labeling scheme.

\begin{definition}[Labeling Scheme Induced by a Total Order] \label{def:inducedlabeling}
Let $\pi$ be any total order on $E$. For every $t \in [0,1]$, the {\it labeling scheme $\Lambda_\pi$ induced by $\pi$} assigns a label to each $e \in \OPT(E_t^+)$ equal to the relative rank of $e$ in $\OPT(E_t^+)$ with respect to $\pi$.
\end{definition}
In other words, for $X = \OPT(E_t^+)$, the labeling scheme assigns label $1$ to the element of $X$ that appears first in $\pi$, label $2$ to the element of $X$ that appears next, and so on. Note that even for this simple labeling scheme, the same element may have different labels at different times. Given a labeling scheme $\Lambda$, we say that an {\it improving time $t$ has label $i$}, denoted as $z(t) = i$ if the improving element arriving at time $t$ receives label $i$. We define one more object that is useful in our analysis.

\begin{definition}[Improving Word in Time Interval {$[a,b]$}]
Let $0 \leq a < b \leq 1$ and $\set{e_k, \dots, e_1}$ be the set of improving elements that arrive in time interval $[a,b]$ with improving times $t_j= t_{e_j}$, where the elements are indexed in a decreasing order with respect to their arrival times, i.e., $t_k < \dots < t_1$. The {\it improving word in $[a,b]$} is $z = z_1 \dots z_k \in [r]^k$, where $z_i=z(t_i)$ is the {\it label of $e_i$ in $\OPT(E_{t_i}^+)$}. 
\end{definition}
For an illustration of the previous definition, see Fig.~\ref{fig:improving-word}.
Whenever we apply a labeling scheme $\Lambda$ and the resulting improving word in an interval $[a,b]$ is $z$, we say that \emph{$\Lambda$ produces $z$ in $[a,b]$}. The uniformly random arrival order of the elements implies the labels of $z$ are chosen uniformly at random from $[r]$, and the length of $z$ follows a Poisson distribution. The main idea of our analysis is to design a language $\cL$, depending on the matroid class and rank, such that, for every optimal element $e^* \in \OPT(E)$, we can find a labeling scheme $\Lambda$ which ensures that $e^*$ is selected by the algorithm whenever the improving word produced by $\Lambda$ in $[p,1]$ belongs to $\cL$. We call such a language an \emph{improving language}. 
%In the full version of this paper, we state and prove some technical results regarding labeling schemes and improving words used in analyzing our algorithms.
\sidecaptionvpos{figure}{c}
\begin{SCfigure}[50][t!]
\centering
\includegraphics[width=0.5\textwidth]{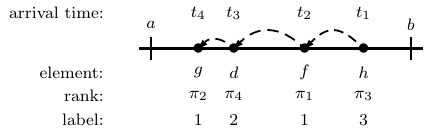}
\caption{A sequence of improving element arrivals in $[a,b]$. Assume that $E_{t_4}=\{g\}$, $E_{t_3}=\{g,d\}$, $E_{t_2}=\{g,d,f\}$, and $E_{t_1}=\{g,d,f,h\}$. If the elements are ranked as $f >_{\pi} g >_{\pi} > h >_{\pi} d$, the improving word induced by $\pi$ in $[a,b]$ is $z = 3121$.}
\label{fig:improving-word}
\end{SCfigure}

\subsection{Technical Tools for Analyzing Labeling Schemes}\label{sec:technical-labeling}
We present some technical results regarding labeling schemes and improving words used in analyzing our algorithms.
For a set $S$ of labels, we denote by $N_S[a,b)$ and $N_S[a,b]$ the number of improving times in $[a,b)$, and $[a,b]$ respectively, for which the label belongs in $S$.

\begin{lemma}\label{lem:Poisson-label}
Consider a labeling scheme $\Lambda$. Let $S \subseteq [r]$ be a subset of labels and $t \in (0,1]$, and $\Lambda(e_t)$ denote the label assigned by $\Lambda$ to an element arriving at time $t$.
\begin{enumerate}[itemsep=0pt,label=\normalfont(\roman*)]
    \item \label{poi:1} If $x < t$ is an improving time, then
    $\Pr[\Lambda(e_x) \in S] = \frac{|S|}{r}.$
    Furthermore, this event is independent of the label of any improving time in $(x, t)$.
    
    \item \label{poi:2} The set $\{-\ln(s) \colon s\in T_S(t)\}$ is a Poisson point process in $(-\ln(t),\infty)$. In particular, for every $0<a< b\leq t$, $N_S[a,b)$ is distributed as a Poisson random variable with rate $\lambda = |S| \ln\prn{\frac{b}{a}}$.
    \item \label{poi:3} For any finite family of disjoint intervals $\{[a_i,b_i)\}_i$ and arbitrary -- not necessarily disjoint -- subsets $\{S_i\}_i$ of $[r]$, the random variables $\set{N_{S_i}[a_i,b_i)}_i$ are mutually independent.
    
    \item \label{poi:4} For any family of disjoint subsets $\{S_i\}_i$ of $[r]$ and arbitrary -- not necessarily disjoint -- intervals $\{[a_i,b_i)\}_i$, the random variables $\{N_{S_i}[a_i,b_i)\}_i$ are mutually independent.
\end{enumerate} 
\end{lemma}
\begin{proof}
\ref{poi:1}. Observe that for any improving time $x$, conditioned on the set $X = E_x^+$ of elements that have arrived in the interval $[0,x]$, the element that arrives at time $x$ is chosen uniformly at random from $\OPT(E_x^+)$. Since $\Lambda$ is independent of the internal arrival ordering of $E_x^+$, the label of $x$ is uniformly chosen from $[r]$. Therefore, the probability that the label is in $S$ is exactly $|S| / r$. \medskip

\noindent \ref{poi:2} -- \ref{poi:4}. Each statement is followed directly by the coloring property of Poisson point processes; see~\cite[Pages 163--164]{borovkov2003elements} for further details.
\end{proof}

\begin{lemma}\label{lem:words}
Let $z$ be the random variable representing the improving word produced in the interval  $[a,b]$, $|z|$ be its length, and $\lambda= r \ln\prn{\frac{b}{a}}$. Furthermore, let $S, T\subseteq [r]$ be two disjoint subsets of the labels, $z_S$ (resp. $z_T$) be the subsequence of $z$ obtained by deleting all symbols not in $S$ (resp. not in $T$), and let $\lambda_S = |S|\ln\prn{\frac{b}{a}}$.
\begin{enumerate}[itemsep=0pt,label=\normalfont(\roman*)]
    \item For every $k\in \NN$, $\Pr[ |z| = k ] = \Pr[N[a,b)=k] = \frac{\lambda^ke^{-\lambda}}{k!}$.
    \item Conditioned on $|z|=k$, $z$ is a uniform random word in $[r]^k$.

    \item For every $k\in \NN$, $\Pr[ |z_S| = k ] = \Pr[N_S[a,b)=k] = \frac{(\lambda_S)^ke^{-\lambda_S}}{k!}$.
    \item Conditioned on $|z_S|=k$, $z_S$ is a uniform random word in $S^k$.
    \item $z_S$ and $z_T$ are independent.
\end{enumerate}
\end{lemma}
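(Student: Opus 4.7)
The lemma is a direct translation of Lemma~\ref{lem:Poisson-label} into the language of improving words. The improving word $z$ in $[a,b]$ is fully determined by two ingredients: (a) which improving times fall in $[a,b]$, and (b) which label in $[r]$ the scheme $\Lambda$ assigns at each such time. The plan is to derive each item (i)--(v) by matching one of these two ingredients to the corresponding clause of Lemma~\ref{lem:Poisson-label}.

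For items (i) and (iii), I would note that $|z| = N[a,b) = N_{[r]}[a,b)$ and $|z_S| = N_S[a,b)$ by definition. Applying Lemma~\ref{lem:Poisson-label}\ref{poi:2} with label-set $[r]$ and with $S$, respectively, then yields the two Poisson distributions with rates $\lambda = r\ln(b/a)$ and $\lambda_S = |S|\ln(b/a)$.

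For items (ii) and (iv), I would invoke Lemma~\ref{lem:Poisson-label}\ref{poi:1}, which asserts that at each improving time the label is uniform in $[r]$ and independent of the labels at other improving times in the interval. Hence, conditioning on $|z|=k$ leaves the $k$ labels i.i.d.\ uniform on $[r]$, which is (ii). For (iv), I would further condition on each label lying in $S$: a uniform random variable in $[r]$ conditioned on taking a value in $S$ is uniform on $S$, and independence across distinct improving times is preserved under this conditioning, so $z_S$ given $|z_S|=k$ is uniform on $S^k$.

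The item that I expect to require the most care is (v). I would first apply Lemma~\ref{lem:Poisson-label}\ref{poi:4} to the common interval $[a,b)$ and the disjoint label sets $S$ and $T$ to get that the lengths $|z_S|$ and $|z_T|$ are independent. To promote this to independence of the full words, I would observe that the label at each improving time is drawn independently (by Lemma~\ref{lem:Poisson-label}\ref{poi:1}), so the values of labels falling into $S$ are functions of a part of the randomness that is disjoint from the one governing labels falling into $T$. Combined with (iv) applied to both $S$ and $T$, the conditional distribution of $(z_S,z_T)$ given the two lengths factors as a product of a uniform law on $S^{|z_S|}$ and a uniform law on $T^{|z_T|}$, and together with the independence of the lengths this yields the full independence of $z_S$ and $z_T$. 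The underlying rigorous justification is the coloring property of Poisson point processes cited from \cite{borovkov2003elements}.
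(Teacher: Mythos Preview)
Your proposal is correct and matches the paper's approach exactly: the paper states that the lemma is a direct consequence of Lemma~\ref{lem:Poisson-label} applied to improving words and omits the proof entirely. You have simply spelled out the item-by-item correspondence (lengths via \ref{poi:2}, uniformity and independence of labels via \ref{poi:1}, and independence of $z_S$ and $z_T$ via \ref{poi:4} plus the coloring property) in more detail than the paper does.
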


\begin{proof}
Direct consequence of Lemma~\ref{lem:Poisson-label} when applied to improving words.
\end{proof}
Finally, the following lemma is useful in characterizing the improving words that guarantee selecting a specific element of $\OPT(E)$ in our algorithms.

\begin{lemma}\label{lem:standardlabel}
Let $e^*\in \OPT(E)$ and $\pi$ be any total order of $E$ for which $\pi_1 = e^*$. Let $p \leq t_k <  \dots < t_1 \leq 1$ be the improving times in $[p,1]$ and $z = z_1 z_2 \dots z_k$ be the associated improving word in $[p,1]$ with respect to the labeling $\Lambda_\pi$.  
\begin{enumerate}[itemsep=0pt,label=\normalfont(\roman*)]
    \item \label{stdlbl:1} $z$ contains a symbol $1$ if and only if the arrival time of $e^*$ is in $[p,1]$.
    
    \item \label{stdlbl:2} If $z$ contains a symbol $1$ and $j = \argmin\{\ell\in[k] \colon z_\ell = 1\}$, then $t_{e^*} = t_j$. 
\end{enumerate}
\end{lemma}
\begin{proof}
\ref{stdlbl:1}. By Lemma~\ref{lem:improving-elems} we know that $e^*$ is improving, and thus $e^* \in \OPT(E_{t_{e^*}}^+)$. Also, $\pi_1 = e^*$, and thus if $t_{e^*} \in [p,1]$, then $\Lambda_\pi$ assigns a label of $1$ to $e^*$ upon arrival and thus $z$ contains a $1$. Furthermore, if $t_{e^*} \in [0,p)$, then for any other improving element $e$ arriving at $t_e \in [p,1]$, the relative rank of $e$ in $\OPT(E_{t_e}^+)$ according to $\pi$ cannot be $1$, since $e^*\in \OPT(E_{t_e}^+)$ and $\pi_1 = e^*$. Thus, $z$ does not contain a $1$.

\noindent \ref{stdlbl:2}. Assume that $z$ contains a $1$. By the first part, this implies that the arrival time of $e^*$ is in $[p,1]$. Since the label of $e^*$ in $\OPT(E^+_{t_{e^*}})$ is $1$, $t_{e^*}$ is among $\{t_\ell \colon \ell \in[k], z_\ell = 1\}$. Suppose that $t_{e^*} = t_i$ for some $i > j = \argmin\{\ell \in [k] \colon z_\ell = 1\}$, and let $e$ denote the improving element arriving at $t_j$. Then, since $t_j > t_i$, both $e^*$ and $e$ are contained in $\OPT(E_{t_j}^+)$. However, this means that the rank of $e$ is at least 2 in $\OPT(E_{t_j})$ yielding a contradiction.
\end{proof}

%%%%%%%%%%%%%%%%
\subsection{Algorithms via Improving Languages}
%%%%%%%%%%%%%%%%
In our framework, after skipping the elements arriving before $p$, we select an improving element if the sequence of different optimum sets we have seen (which correspond to the optimum sets at improving times) satisfies a property $P$, where $P$ is tailored to the specific matroid class and rank $r$. If $P$ holds when an element $e$ arrives, $P$ must guarantee  that $\ALG + e \in \cI$. To analyze such an algorithm, our main goal is to find, for every $e^*\in \OPT(E)$, a labeling scheme $\Lambda:=\Lambda(e^*)$ and a language $\cL \subseteq [r]^*$ such that, if $z \in [r]^*$ is the improving word in $[p,1]$ induced by the labeling scheme $\Lambda$, then $z\in \cL \implies e^*\in \ALG$. The significance of this goal is based on the fact that $\Pr[z \in \cL]$ is much easier to compute due to $z$ coming from a Poisson point process. In the following sections, we exploit this approach for laminar matroids (\Cref{sec:laminar}) and graphic matroids (\Cref{sec:graphic}).
%%%%%%%%%%%%%%%%

%%%%%%%%%%%%%%%%
\section{Laminar Matroids}
\label{sec:laminar}
%%%%%%%%%%%%%%%%

We recall that a matroid $\calM=(E,\calI)$ is {\it laminar} if $\calI=\{F\subseteq E\colon |F\cap E_i|\leq g_i\ \text{for each $i\in [q]$}\}$ for some laminar family $E_1,\dots, E_q$ of subsets of $E$ and upper bounds $g_1,\dots,g_q\in\NN$.
In this section, we apply the labeling framework to obtain an improved analysis of the well-known \textsc{Greedy-Improving} algorithm for laminar matroids. 
In other words, for laminar matroids, we let the property $P$ be that $\ALG + e \in \cI$.

\begin{algorithm}[H]
    \begin{algorithmic}[1]
    \State Initialize $\ALG_0 \gets \emptyset$.
    \State Skip all elements with arrival time in $[0,p)$.
    \For{each element $e$ with $t_e\in [p,1]$, in arrival order}
        \If{$e\in\OPT(t_e)$ and $\ALG_0 + e \in \cI$}\label{st:if}
            \State $\ALG_0 \gets \ALG_0 + e$
        \EndIf
    \EndFor
    \State Return $\ALG_0$.
    \end{algorithmic}
    \caption{\textsc{Greedy-Improving Algorithm}}
    \label{alg:greedy-improving}
\end{algorithm}

As an illustration of our techniques, we first consider, in Section~\ref{sec:uniform}, the special case of uniform matroids and characterize the probability-competitiveness of the algorithm (Proposition~\ref{lem:gi}). Then, we generalize our analysis for the broader class of laminar matroids in Section~\ref{sec:general}, showing that \textsc{Greedy-Improving} is $\prn{1 - \ln(2)}$-probability-competitive for every laminar matroid (Theorem~\ref{thm:laminar-intro}); we prove this bound is tight, thereby settling the probability-competitiveness of \textsc{Greedy-Improving} completely. 
Finally, in Section~\ref{sec:rank2}, we design a new algorithm for rank-$2$ matroids (a subclass of laminar matroids) that outperforms \textsc{Greedy-Improving} to get a $0.3462$-probability-competitiveness. 

We remark that, in the context of laminar matroids, Assumption~\ref{asm:copies} corresponds to augmenting the matroid with a large set $F$ of parallel copies of a single basis $B_0$ (i.e., each element of $B_0$ has a countably infinite number of parallel copies). 
%All missing proofs can be found in the full version of this paper.

%%%%%%%%%%%%%%%%
\subsection{Uniform Matroids}
\label{sec:uniform}
%%%%%%%%%%%%%%%%

As a warm-up, we present a new analysis of \textsc{Greedy-Improving} for the special class of uniform matroids using labeling schemes. It is worth noting that the algorithm is not optimal for uniform matroids; we describe it to aid the reader and illustrate our analysis.
When $(E,\calI)$ is a rank-$r$ uniform matroid on $n$ elements, the condition $\ALG_0+e\in\calI$ in Step~\ref{st:if} is equivalent to $|\ALG_0|\leq r-1$.\\  

%\begin{definition}
\noindent{\bf Labeling Scheme $\Lambda_{\normalfont\text{uniform}}$.}
We now introduce our labeling scheme for the case of uniform matroids.
Let $e^*\in \OPT(E)$. Consider a total order $\pi$ of $E$ in which $e^*$ appears first. We set $\Lambda_{\normalfont\text{uniform}}(e^*)$ to be the labeling scheme induced by $\pi$ as in Definition~\ref{def:inducedlabeling}. 
The following lemma states the correspondence between the improving words produced by this labeling scheme and the elements selected by the greedy-improving algorithm.
%\end{definition}

\begin{lemma}\label{lem:uniform-language}
Let $z$ be the improving word in $[p,1]$ produced by the labeling scheme $\Lambda_{\normalfont\text{uniform}}(e^*)$, $\ALG_0$ be the output of \textsc{Greedy-Improving}, and  $$\mathcal{L}_{\normalfont\text{uniform}}=\{x1y \in [r]^* \colon x\in \{2,3,\ldots,r\}^* \text{ and } |y|\leq r-1\}.$$
Then $e^*\in \ALG_0$ if and only if $z\in \mathcal{L}_{\normalfont\text{uniform}}$.
\end{lemma}

\begin{proof}
By Lemma~\ref{lem:standardlabel}, $e^*\in \ALG_0$ if and only if $z$ contains a symbol 1 and the first appearance of $1$ in $z$ is on one of the last $r$ positions of $z$. This is equivalent to $z\in \mathcal{L}_{\text{uniform}}$.
\end{proof}

Let $c(r,p) = -p\ln(p)$ if $r=1$ and
\begin{align*}
c(r,p) = \frac{1-\prn{1-\frac{1}{r}}^r}{\prn{1-\frac{1}{r}}^r}p + p^r\sum_{k = 0}^{r-1} \frac{\prn{r\ln\prn{\frac{1}{p}}}^k}{k!}  - \frac{p^r}{\prn{1-\frac{1}{r}}^r}\sum_{k = 0}^{r-1}\frac{\prn{(r-1)\ln\prn{\frac{1}{p}}}^k}{k!},
\end{align*}
for $r\geq 2$. 
In the following proposition, we give an exact formula for the competitiveness of greedy-improving in uniform matroids as a function of the rank $r$.
This formula is easy to evaluate for small $r$ -- see Table~\ref{tab:GreedyUniform} in Appendix~\ref{app:tables}.

\begin{proposition}\label{lem:gi}
The \textsc{Greedy-Improving} Algorithm~\ref{alg:greedy-improving} is exactly $c(r,p)$-probability-competitive for rank-$r$ uniform matroids. Furthermore, as $r\to \infty$, the best sample size for uniform matroids is $p=1/e$, achieving a probability-competitiveness of $1 - 1/e$.
\end{proposition}

\begin{proof}
By Lemma \ref{lem:uniform-language},
\[
\Pr[e^*\in \ALG_0] = \Pr[z=x1y \text{ with $x\in \{2,3,\ldots,r\}^*, |y|\leq r-1$}].
\]
Let $\lambda = r\ln\prn{{1}/{p}} = -r\ln(p)$ and let us apply Lemma~\ref{lem:words}. For $r=1$, we get~$\Pr[e^*\in \ALG_0] = \Pr[z=1] = \frac{\lambda^1e^{-\lambda}}{1!}= -p\ln p.$ 
For $r\geq 2$, we have
\begin{align*}
&\Pr[e^*\in \ALG_0]\\
&\quad=\sum_{k=1}^\infty \sum_{m=0}^{\min(k,r)-1}  \Pr[z=x1y, |z|=k, |y|=m, \text{ with $x\in (\{2,3,\ldots,r\})^{k-m-1}$}]\\
&\quad=\sum_{k=1}^\infty \frac{\lambda^k e^{-\lambda}}{k!}\sum_{m=0}^{\min(k,r)-1} \prn{1-\frac{1}{r}}^{k-m-1} \cdot \frac{1}{r} \\
&\quad=\sum_{k=1}^\infty \frac{\lambda^k e^{-\lambda}}{k!} \prn{  \prn{1-\frac{1}{r}}^{k-\min(k,r)} - \prn{1-\frac{1}{r}}^k } \\
&\quad=\sum_{k=0}^{r-1} \frac{\lambda^k e^{-\lambda}}{k!} \prn{1 - \prn{1-\frac{1}{r}}^k } + \sum_{k = r}^{\infty} \frac{\lambda^k e^{-\lambda}}{k!} \prn{\prn{1-\frac{1}{r}}^{k-r} - \prn{1-\frac{1}{r}}^k}.
\end{align*}
The second equality holds since the probability that $z_{k-m} = 1$ is ${1}/{r}$ and the probability that $x \in \{2,3,\ldots,r\}^{k-m-1}$ is $\prn{1-{1}/{r}}^{k-m-1}$.

Let $\lambda' = \lambda\prn{1-{1}/{r}} = (r-1)\ln\prn{{1}/{p}}$ and note that $\lambda - \lambda' = {\lambda}/{r} = \ln\prn{{1}/{p}}$. Denote by $P(\mu)$ a random variable distributed as a Poisson with parameter $\mu$ for any $\mu$. Then,
\begin{align*}
&\Pr[e^*\in \ALG_0] \\
&\quad= \Pr[P(\lambda)\leq r-1]- \sum_{k=0}^{r-1}\frac{(\lambda')^ke^{-\lambda}}{k!} + \frac{1 - \prn{1-\frac{1}{r}}^r}{\prn{1-\frac{1}{r}}^r}\sum_{k = r}^\infty \frac{(\lambda')^k e^{-\lambda}}{k!} \\
&\quad=\Pr[P(\lambda) \leq r-1] - e^{-\lambda + \lambda'}\Pr[P(\lambda') \leq r-1]+ \frac{1 - \prn{1-\frac{1}{r}}^r}{\prn{1-\frac{1}{r}}^r}e^{-\lambda+\lambda'}\Pr[P(\lambda')\geq r]\\
&\quad=\Pr[P(\lambda)\leq r-1]- p\Pr[P(\lambda')\leq r-1]+ \frac{1-\prn{1-\frac{1}{r}}^r}{\prn{1-\frac{1}{r}}^r}p(1-\Pr[P(\lambda')\leq r-1])\\
&\quad=\frac{1-\prn{1-\frac{1}{r}}^r}{\prn{1-\frac{1}{r}}^r}p + \Pr[P(\lambda)\leq r-1] - \frac{p}{\prn{1-\frac{1}{r}}^r}\Pr[P(\lambda')\leq r-1] \\
&\quad=\frac{1 - \prn{1-\frac{1}{r}}^r}{\prn{1-\frac{1}{r}}^r} p + p^r\sum_{k = 0}^{r-1} \frac{\prn{r\ln\prn{\frac{1}{p}}}^k}{k!}  - \frac{p^r}{\prn{1-\frac{1}{r}}^r} \sum_{k = 0}^{r-1}\frac{\prn{(r-1) \ln\prn{\frac{1}{p}}}^k}{k!},
\end{align*}
which concludes the proof of the first part of the proposition.

Recall that for $r > 1$,
\[
c(r,p) = \frac{1 - \prn{1-\frac{1}{r}}^r}{\prn{1-\frac{1}{r}}^r} p + \Pr[P(\lambda)< r] - \frac{p}{\prn{1-\frac{1}{r}}^r}\Pr[P(\lambda')< r],
\]
and that the mean and variance of $P(\lambda)$ are both $\lambda$. For $p > {1}/{e}$, we can write ${1}/{\ln\prn{{1}/{p}}} = 1 + \eps$ for some $\eps > 0$, and so $r = \lambda(1+\eps)$. Using Chebyshev's inequality and that $\lambda'< \lambda$, we get
\begin{align*}
\Pr[P(\lambda')<r] &> \Pr[P(\lambda)< r]\\
&= 1 - \Pr[P(\lambda)
\geq \lambda + \lambda \eps] \geq 1 - \frac{1}{\lambda\eps^2} = 1 - \frac{1}{r\ln\prn{{1}/{p}} \eps^2},
\end{align*}
which goes to 1 as $r\to \infty$.
We conclude that, for $p > {1}/{e}$,
\[
\lim_{r\to \infty} c(r,p) = 1+p \lim_{r\to \infty} \prn{\frac{1 - \prn{1-\frac{1}{r}}^r}{\prn{1-\frac{1}{r}}^r} - \frac{1}{\prn{1-\frac{1}{r}}^r}} = 1-p.
\]
On the other hand, if $p < {1}/{e}$ then ${1}/{\ln\prn{{1}/{p}}} = 1-\eps$ for some $\eps > 0$ and $r-1 = \lambda'(1-\eps)$. Then,
\begin{align*}
\Pr[P(\lambda)<r] &< \Pr[P(\lambda')\leq r-1]\\
&=\Pr[P(\lambda')\leq \lambda'- \lambda' \eps] \leq\frac{1}{\lambda'\eps^2} = \frac{1}{(r-1)\ln\prn{{1}/{p}}\eps^2},
\end{align*}
which goes to 0 as $r\to \infty$. We conclude that, for $p < {1}/{e}$,
\[
\lim_{r \to \infty} c(r,p) = p \lim_{r\to \infty} \frac{1-\prn{1-\frac{1}{r}}^r}{\prn{1-\frac{1}{r}}^r} =p (e-1).
\]

By the above, the limit curve for the probability-competitiveness of \textsc{Greedy-Improving} when $r\to \infty$ and $p$ is a piecewise linear continuous function that increases from 0 at $p=0$ to $1 - {1}/{e}$ at $p = {1}/{e}$, and then it decreases linearly to $0$ again at $p = 1$, i.e., the best sample size is ${1}/{e}$.
\end{proof}

%%%%%%%%%%%%%%%
\subsection{General Laminar Matroids}
\label{sec:general}
%%%%%%%%%%%%%%%%

A \textit{laminar matroid} is defined on a ground set with a laminar family of subsets. This family is organized into layers, where the sets in the $i$-th layer are pairwise disjoint, have rank $i$ and sets in lower layers are nested within sets in higher layers. In this section, we show the following theorem.

\begin{theorem}\label{thm:laminar-intro}
The following holds:
\begin{enumerate}[label=\normalfont(\roman*)]
    \item The \textsc{Greedy-Improving} algorithm is $(1 - \ln(2))$-probability-competitive for the MSP on laminar matroids. \label{thm:laminar-lower}
    \item For every $\eps > 0$, there is a laminar matroid $\cM(\eps)$ for which the probability-competitiveness of \textsc{Greedy-Improving} is at most $1 - \ln(2) + \eps$.\label{thm:laminar-upper}
\end{enumerate}    
\end{theorem}

%In order to prove this result, we extend the analysis of \textsc{Greedy-Improving} from uniform to laminar matroids.

%%%%%%%%%%%%%%%%
\subsubsection{Lower Bound: Proof Theorem~\ref{thm:laminar-intro}\ref{thm:laminar-lower}}
\label{sec:lower}
%%%%%%%%%%%%%%%%
For uniform matroids, our analysis relies on a total order $\pi$ of the elements on which $e^*$ appears first. 
In contrast, for the laminar case, we need to consider specific orders that depend on the laminar structure.
The following definition captures the orders we use in our construction.

\begin{definition}[Chain Order]
Let $e^* \in \OPT(E)$ and $C_1 \subseteq \dots \subseteq C_q = E$ be the members of the laminar family containing $e^*$. We refer to these sets as the {\it chain induced by $e^*$}. We say that a total order $(\pi_1,\dots, \pi_n)$ of $E$ is a {\it chain order with respect to $e^*$} if $\pi_1=e^*$ and for $1\leq i<j\leq r$, we have $\{\ell\in[q]\colon \pi_i\in C_\ell\}\subseteq\{\ell\in[q]\colon \pi_j\in C_\ell\}$. In other words, in the enumeration $\pi$, $e^*$ appears first, then the rest of the elements from $E \cap C_1$, then all the elements from $(E \cap C_2) \setminus C_1$, then all the elements from $(E \cap C_3)\setminus C_2$ and so on.\\
\end{definition}

\noindent{\bf Labeling Scheme $\Lambda_{\text{chain}}$.} We now introduce the labeling scheme we use to analyze \textsc{Greedy-Improving} for laminar matroids.
%\begin{definition}
%\noindent{\bf Chain Labeling Scheme for Laminar Matroids.}
Let $e^*\in \OPT(E)$ be an element of the optimal basis and $\pi$ be a chain order with respect to $e^*$. We set $\Lambda^{\pi}_{\text{chain}}(e^*)$ as the labeling scheme induced by $\pi$. Since the choice of $\pi$ is arbitrary, we write $\Lambda_{\text{chain}}(e^*)$ to denote a labeling scheme induced by an arbitrary but fixed chain order $\pi$ with respect to $e^*$.
%\end{definition} 
%\begin{definition}[Well-indexed Word]
We say that a word $z$ is \emph{well-indexed} if it satisfies the following:
\begin{enumerate}[label=\normalfont(\arabic*)]
\item There exists a unique symbol 1 in $z$; i.e. $z = x1y$ with $x, y \in \{2,3,\ldots,r\}^*$.
\item For each $c\in [|y|]$, it holds that
    $\abs{\set{i \in [|y|] \colon y_i \leq c}} \leq c-1.$\label{eq:wi}
   In particular, since $1\leq y_i\leq r$ for all $i$, this condition implies $|y|\leq r-1$.
\end{enumerate}
Denote by $\cL_{\text{laminar}}=\{z\in [r]^*\colon z \text{ is well-indexed}\}$. 
The following lemma states the correspondence between the improving words produced by this labeling scheme and the elements selected by the greedy-improving algorithm.
%\end{definition}

\begin{lemma}\label{lem:well-indexed-alg}
Let $z$ be the improving word of $e^*\in\OPT(E)$ produced by the labeling scheme $\Lambda_{\text{chain}}(e^*)$. 
If $z \in \cL_{\normalfont\text{laminar}}$ then $e^*\in \ALG_0$.
\end{lemma}
\begin{proof}
Let $z$ be the improving word of $e^*$ and suppose that $z=x1y$ is well-indexed, where the symbol 1 refers to the entry of $z$ corresponding to $e^*$. Then, by Lemma \ref{lem:standardlabel}, the arrival time $t_{e^*}$ of $e^*$ is in $[p,1]$ and there are exactly $m= |y|$ improving elements arriving in $[p,t)$. Let us denote these elements  by $f_m, \dots, f_1$ with $t(f_m)<\dots <t(f_1)$, so that the label of $f_i$ is $y_i$ for all $i\in [m]$.

Let $\ALG_- \subseteq \{f_m, \dots, f_{1}\}$ denote the set of elements chosen by the algorithm before the arrival of $e^*$. Take an arbitrary member $C$ of the chain induced by $e$ and let $c = r(C)$ be its rank. Let $s<t_{e^*}$ and sort the elements of $\OPT(E_s)$ increasingly by their labels, i.e., by the chain order $\pi$ with respect to $e^*$. Since all elements of $\OPT(E_s) \cap C$ appear before the elements of $\OPT(E_s) \setminus C$ in this order, we conclude that all elements with label $c+1$ or higher are not in $C$. Since this is true also for all improving times $s$ before $t_{e^*}$, we get that $C \cap \ALG_- \subseteq \{f_i\colon y_i \leq c\}$. As $z$ is well-indexed, we have
\begin{align*}
\abs{C \cap \ALG_-} \leq \{i\in [|y|]\colon  y_i \leq c\} \leq c-1.
\end{align*}
We conclude that $\abs{C \cap \prn{\ALG_- +e^*}} \leq 1+ \abs{C\cap \ALG_-} \leq r(C)$ for each $C$ in the chain induced by $e^*$, therefore $\ALG_- + e^*$ is independent, and thus $e^*$ is added to $\ALG_0$.
\end{proof}

Thus, to lower bound the competitiveness of Algorithm~\ref{alg:greedy-improving}, it suffices to compute the probability that the improving word $z$ produced in $[p,1]$ is well-indexed.

\begin{lemma}\label{lem:good}
Let $m\leq r-1$ and $y\in [r]^{m}$ be a word chosen uniformly at random from $[r]^m$. Then, the probability that condition \ref{eq:wi} holds is equal to $1 - m/r$.
\end{lemma}
\begin{proof}
    Let $m\geq 0,r\geq 1$ be integers. We call a sequence $b\in [r]^m$ {\it good} if it satisfies \ref{eq:wi} and {\it bad} otherwise. Let $B(m,r)$ denote the number of bad sequences. Then, $\Pr\brk{\text{$b$ is good}} = 1-{B(m,r)}/{r^m}.$ We claim that
    \[
    B(m,r) = \begin{cases}
        mr^{m-1} &\text{ if $0\leq m\leq r-1$,}\\
        r^m &\text{ otherwise.}
    \end{cases}\]
If $m\geq r$, then any sequence in $[r]^m$ will contain $m>r-1$ indices that are at most $r$, so every sequence is bad and $B(m,r)=r^m$. Assume now that $m\leq r-1$. The rest of the proof is by induction on $r$. For the base case, observe that if $m=0$, then the empty sequence is good by definition, so $B(0,1)=0$. Suppose now that $r\geq 2$, and let us count the number of bad sequences in $[r]^m$ for some $1\leq m\leq r-1$. Since the length $m$ of the sequence is at most $r-1$, any sequence $b\in [r]^m$ is bad if and only if the subsequence $b'$ of $b$ induced by those indices $j$ for which $b_j\leq r-1$ is bad. Hence
\begin{align*}
B(m,r) &= \sum_{j=0}^m\binom{m}{j}B(j,r-1)\\
&= \sum_{j=1}^m \binom{m}{j}j(r-1)^{j-1}=\sum_{j=1}^m m \binom{m-1}{j-1}(r-1)^{j-1}= m r^{m-1},
\end{align*}
where the first equality follows by the induction hypothesis, and the last one follows from the Binomial Theorem. The lemma then follows since if $m=0$, the probability of selecting a good sequence is 1, and if $1\leq m\leq r$, this probability is $1 - m r^{m-1}/{r^m} = 1 - {m}/{r}$.
\end{proof}

For any $p\in(0,1)$, let us define $a(r,p) = -p\ln(p)$ if $r = 1$ and 
\begin{align*}
    a(r,p) = {}&{}-2p+(2+\ln p)\Pr[P(-r\ln p)<r-1]\\
    {}&{}+2\Pr[P(-r\ln p)=r-1]+ \frac{p}{\prn{1 - \frac{1}{r}}^r}\Pr[P(-(r-1)\ln p)\geq r],
\end{align*}
for $r\geq 2$, where $P(\mu)$ is a Poisson random variable with rate $\mu$.
In the following lemma, we show a lower bound on the competitiveness of the \textsc{Greedy-Improving} algorithm parameterized by the rank $r$ of the laminar matroid.

\begin{lemma}\label{lem:arp}
\textsc{Greedy-Improving} is at least $a(r,p)$-probability-competitive for rank-$r$ laminar matroids.
\end{lemma}

\begin{proof}
Let $e^*\in\OPT(E)$ be any element of the optimal basis and $z$ be the improving word of $e^*$. By Lemma~\ref{lem:well-indexed-alg}, it suffices to show that $a(r,p) \triangleq \Pr[z \text{ is well-indexed}]$. By setting $\lambda = r \ln\prn{{1}/{p}}$, we have
\begin{align*}
    a(r,p)
    &= \sum_{k=1}^{\infty}\frac{\lambda^k e^{-\lambda}}{k!} \Pr[z=x1y \text{ is well-indexed} |\; |z|=k]\\
     &= \sum_{k=1}^{\infty}\sum_{m=0}^{\min(k,r)-1} \frac{\lambda^k e^{-\lambda}}{k!} \Pr[z=x1y \text{ is well-indexed and $|y|=m$} |\; |z|=k]\\
     &=\sum_{k=1}^{\infty}\frac{\lambda^k e^{-\lambda}}{k!}\sum_{m=0}^{\min(k,r)-1} \frac{1}{r} \cdot \prn{1-\frac{1}{r}}^{k-m-1} \prn{1 - \frac{m}{r}},
\end{align*}
since $\prn{1-{1}/{r}}^{k-m-1}$ is the probability that $z$ contains no 1 symbol, ${1}/{r}$ is the probability that the symbol at position $m+1$ from the right is a 1, and $1 - {m}/{r}$ is the probability that $y$ is good by Lemma~\ref{lem:good}. For $r = 1$, this simplifies to
$a(r,p) = \frac{\lambda^1 e^{-\lambda}}{1!} = -p\ln(p).$
For $r\geq 2$, define $\lambda' = {\lambda (r-1)}/{r} = (r-1) \ln\prn{{1}/{p}}$. We obtain that
\begin{align*}
&a(r,p)\\
&= \sum_{k=1}^{\infty}\frac{\lambda^k e^{-\lambda}}{k!}\frac{1}{r}\left(\frac{r-1}{r}\right)^{k-1}\sum_{m=0}^{\min(k,r)-1}\left(\frac{r}{r-1}\right)^{m} \prn{1 - \frac{m}{r}} \\
&=\sum_{k=1}^{\infty}\frac{\lambda^k e^{-\lambda}}{k!}\frac{1}{r}\left(\frac{r-1}{r}\right)^{k-1} \left(2-2r + \left(\frac{r}{r-1}\right)^{\min(k,r)-1}(2r-\min(k,r))\right)\\
&=\frac{(2-2r)}{r-1}e^{-\lambda+\lambda'}\sum_{k=1}^{\infty}\frac{(\lambda')^{k} e^{-\lambda'}}{k!}+ \sum_{k=1}^{r-1}\frac{\lambda^{k} e^{-\lambda}}{k!}\frac{1}{r}(2r-k) + \sum_{k=r}^{\infty}\frac{\lambda^{k} e^{-\lambda}}{k!}\left(\frac{r-1}{r}\right)^{k-r} \\
&=-2e^{-\lambda+\lambda'}(1-e^{-\lambda'}) + 2\sum_{k=1}^{r-1}\frac{\lambda^{k} e^{-\lambda}}{k!}-\sum_{k=1}^{r-1}\frac{\lambda^{k} e^{-\lambda}}{k!}\frac{k}{r}+\left(\frac{r-1}{r}\right)^{-r}e^{-\lambda+\lambda'}\sum_{k=r}^{\infty}\frac{(\lambda')^{k} e^{-\lambda'}}{k!} \\
&=-2e^{-\lambda+\lambda'}+2e^{-\lambda} + 2\sum_{k=1}^{r-1}\frac{\lambda^{k} e^{-\lambda}}{k!}-\frac{\lambda}{r}\sum_{k=1}^{r-1}\frac{\lambda^{k-1} e^{-\lambda}}{(k-1)!}+\left(\frac{r-1}{r}\right)^{-r}e^{-\lambda+\lambda'}\sum_{k=r}^{\infty}\frac{(\lambda')^{k} e^{-\lambda'}}{k!}\\
&=-2e^{-{\lambda}/{r}}+2\Pr[P(\lambda)<r]-\frac{\lambda}{r}\Pr[P(\lambda)<r-1]+\left(\frac{r-1}{r}\right)^{-r}e^{-{\lambda}/{r}}\Pr[P(\lambda')\geq r].
\end{align*}
Using that $\lambda = r \ln\prn{{1}/{p}}$, we conclude
\[
a(r,p) = -2p+(2+\ln p) \Pr[P(\lambda)<r-1] + 2\Pr[P(\lambda)=r-1] + \frac{p}{\prn{1-\frac{1}{r}}^r}\Pr[P(\lambda')\geq r],    
\]
which finishes the proof of the lemma.
\end{proof}

Using this lemma, we obtain the lower bound in Theorem~\ref{thm:laminar-intro} by studying the function $a(r,p)$. One can evaluate $a(r,p)$ for small $r$ -- see Table~\ref{tab:table-laminar} in Appendix~\ref{app:tables}. We are now ready to prove the lower bound in Theorem~\ref{thm:laminar-intro}.

\begin{proof}[Proof of Theorem~\ref{thm:laminar-intro}\ref{thm:laminar-lower}]
We use the following claim.

\begin{claim}\label{lem:limit}
For $p\in ({1}/{e},1)$, we have $\lim_{r\to\infty} a(r,p)=2-2p+\ln(p)$.
For $p\in(0,{1}/{e})$, we have $\lim_{r\to\infty} a(r,p)=(e-2)p$.
\end{claim}

\begin{proof}
In the proof of Proposition~\ref{lem:gi}, we showed that as $r\to \infty$, $\Pr[P(\lambda)<r]$ and $\Pr[P(\lambda')<r]$ tend to 1 if $p>1/e$, and to 0 if $p<1/e$. Since $\max_{k\in\NN}\Pr[P(\lambda)=k]$ tends to 0 as $\lambda\to \infty$, we conclude that for $p>1/e$,  $a(r,p)$ tends to $-2p+(2+\ln(p))\cdot 1 + 2\cdot 0 + (ep/e^{-1})\cdot 0=2-2p+\ln(p)$ as $r\to\infty$; and for $p<1/r$, $a(r,p)$ tends to $-2p+(2+\ln(p))\cdot 0 + 2\cdot 0 + (p/e^{-1})\cdot 1=p(e-2)$. This ends the proof of the claim.
\end{proof}

Since any algorithm for rank-$(r+1)$ laminar matroids can be directly applied to rank-$r$ laminar matroids by simply adding a dummy co-loop to the matroid, we deduce that for every rank $r$, and every value $p\in (0,1]$, the competitive ratio of Algorithm \ref{alg:greedy-improving}  is at least $\sup_{r'\geq r} a(r',p)\geq \lim_{r'\to \infty} a(r',p)$.

Noting that for all $p<1/e$, $p(e-2)<1-2/e$, and this value is exactly $2-2x+\ln x$ for $x=1/e$, we deduce that the best single value $p$ we can use to maximize the competitive ratio on all laminar matroids is the one that maximizes $2-2p+\ln p$. This expression is maximized on $p={1}/{2}$, achieving a value of $1-\ln 2\approx 0.3068$.
\end{proof}
%%%%%%%%%%%%%%%%
\subsubsection{Upper Bound: Proof of Theorem~\ref{thm:laminar-intro}\ref{thm:laminar-upper}}
\label{sec:upper}
%%%%%%%%%%%%%%%%
We show that the analysis of Algorithm~\ref{alg:greedy-improving} is tight for laminar matroids.
For $q,r\in\NN$, we define a laminar matroid $\cM_{q,r} = (E_{q,r}, \cI_{q,r})$ together with a weight function as follows. The ground set of the matroid is $E_{q,r}=F_1 \sqcup \dots \sqcup F_{r}$, where each $F_i$ has cardinality $q$. Let $E_i= \cup_{j = 1}^i F_j$. The independent sets are $$\cI_{q,r}=\set{I \subseteq E_{q,r} : \abs{I \cap E_i} \leq i\ \text{for all}\ i \in [r]}.$$ Finally, for each $i\in[r]$, we set the weight of each element $e\in F_i$ to be $w_e \sim U[r-i, r - i + 1]$ independently from all other elements (see Fig.~\ref{fig:tight-matroid}). Choosing $q$ and $r$ carefully results in an instance that provides an upper bound on the probability-competitiveness of \textsc{Greedy-Improving} for laminar matroids.
\sidecaptionvpos{figure}{c}
\begin{SCfigure}[50][t]
\centering
\includegraphics[width=0.6\textwidth]{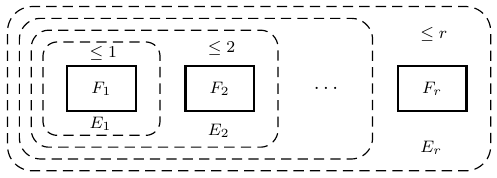}
\caption{Tight example for Alg.~\ref{alg:greedy-improving}. Sets $F_1,\dots,F_r$ are presented with solid lines; $E_1,\dots,E_r$ are presented with dashed lines. The numbers denote the bounds on the $E_i$'s.}
\label{fig:tight-matroid}
\end{SCfigure}
\begin{proof}[Proof of Theorem~\ref{thm:laminar-intro}\ref{thm:laminar-upper}]
We will show a slightly stronger statement. 
Let $p\in (0,1)$ and define $a(p)=2-2p+\ln(p)$ if $p>1/e$, and $a(p)=(2-e)p$ if $p\le 1/e$.
Fix an $\eps>0$. By Claim~\ref{lem:limit}, there exists $r_\eps$ large enough such that 
\[
a(r_\eps, p) < a(p) + {\eps}/{2}.
\]
Fix a positive integer $q$ and consider the instance of laminar matroid secretary corresponding to $r_\eps$ as described above. Let $\cM=(E,\cI)$ denote the instance obtained by adding an infinite number of parallel copies at the end of the value order, so that $\cM$ satisfies Assumption~\ref{asm:copies}. By a slight abuse of notation, we still denote by $F_i$ the $i$-th set of elements described in the above construction together with their parallel copies, and the same applies to the $E_i$'s. 

Let $e^*$ denote the unique element in $\OPT(E) \cap E_1$, and consider the labeling scheme $\Lambda_{\text{chain}}(e^*)$  corresponding to a chain order $\pi$ with respect to $e^*$. Clearly, $e^*$ is accepted by Algorithm~\ref{alg:greedy-improving} if and only if $t_{e^*} > p$ and the algorithm has accepted at most $c-1$ element from $E_c$ in the time interval $({1}/{2}, t_{e^*})$ for each $c\in[r_\eps]$. By definition, for every pair of (original) elements $e \in F_i, e' \in F_j$ with $i < j$, we have $w_e > w_{e'}$. 

Let $T = \set{e \in E \colon t_e < p}$ denote the set of elements that arrived during the sampling period and $\calE$ be the event that $|\OPT(T) \cap F_i| = 1\text{ for all } i \in [r_\eps]$. Since $r_\eps$ is constant, we can choose $q$ large enough so that $\Pr[\calE] \ge 1-{\eps}/{2}$. Note that if $\calE$ holds then any improving element $e \in F_i$ arriving at time $t_e \in [p,t_{e^*})$ will be assigned a label $i$ upon arrival by the labeling scheme $\Lambda_{\text{chain}}(e^*)$.  

Let us now consider the improving word $z$ produced by the labeling scheme in $[p,1]$. Whenever $e^*$ arrives in $[p,1]$, it has label $1$ since it is the element of rank $1$ of $\OPT(t_{e^*}^+)$, and so $z$ has the form $x1y$ for some $x, y \in \{2,3,\ldots,r_{\varepsilon}\}^*$, where the symbol 1 refers to the entry of $z$ corresponding to $e^*$. Furthermore, the constraints described above imply that $e^*$ is accepted if and only if $y$ contains at most $c-1$ copies of label $c$ for each $c \in [r_{\eps}]$. It follows that under event $\calE$, $e^*$ is accepted if and only if $z$ is well-indexed. Since the probability that $z$ is well-indexed is $a(r_\eps,p)$ we get $\Pr[e^*\in \ALG_0] \leq \Pr[z\text{ is well-indexed} \vee \neg \calE] \leq a(r_\eps,p) + \Pr[\neg\calE] < a(p) + {\eps}/{2} + {\eps}/{2}$. We finish the proof by setting $p = {1}/{2}$ on the previous expression.
\end{proof}

\subsection{Rank-2 Matroids}\label{sec:rank2}

An interesting class of matroids for which the best achievable competitiveness is still unknown is the class of rank-2 matroids. A loopless rank-2 matroid $\calM=(E,\calI)$ has very simple structure since its circuits have only 2 or 3 elements. Recall that two elements $e$ and $f$ are called \emph{parallel} if $\{e,f\}$ is a circuit, and being parallel is an equivalence relation on $E$. Consider the equivalence classes $C_1,\dots, C_k$ of this relation. A set $F\in \calI$ is then independent if and only if $|F\cap C_i|\leq 1$ and $|F|\leq 2$. So, in particular, rank-2 matroids are laminar matroids and, by Lemma~\ref{lem:arp}, \textsc{Greedy-Improving} is at least $a(2,p)=p(2 - 2 p + p \ln(p))$-competitive. By setting $p=0.4241$, this leads to a $0.3341$-competitive algorithm for this class.

On the other hand, there are laminar matroids for which \textsc{Greedy-Improving} performs better. For example, on a uniform matroid of rank 2, the algorithm is $c(p,2)=p  (3 - 3 p + 2 p \ln(p)))=a(2,p)+p(1-p)$-competitive. 
Below, we analyze more carefully the competitiveness of \textsc{Greedy-Improving} for this class. Then we propose and analyze a second algorithm, called \textsc{Oblivious-Partition}, with a higher competitiveness on instances where \textsc{Greedy-Improving} behaves badly. 
Their combination leads to a new algorithm with better competitiveness for rank-2 matroids. 
We show the following result.

\begin{theorem}\label{thm:rank-2-intro}
There exists a $0.3462$-probability-competitive algorithm for the MSP on rank-2 matroids.
\end{theorem}

In what follows, we show how to prove the theorem.
Suppose that $\OPT(E)=\{e^*,f^*\}$. Let $z'$ be the improving word of $e^*$ generated by the labeling scheme $\Lambda_{\text{chain}}(e^*)$. As before, denote by $p\leq t_k<\dots <t_1\leq 1$ and $e_k,\dots, e_1$ the improving times and elements in $[p,1]$, respectively, and suppose that $e^*=e_i$. In particular, we know that $z'_1=z'_2=\dots=z'_{i-1}=2$ and $z'_i=1$. Note that $e^*$ is selected as long as $e^*$ is not spanned by $A=\{e_{i+1},\dots,e_k\}$. If the matroid is uniform, it is sufficient that $|A|\leq 1$ to achieve this. For general rank-2 matroids though, $e^*$ is selected if $|A|=0$, if $|A|=1$ and the unique element $e_k$ of $A$ is not parallel to $e^*$, or if $|A|\geq 2$ and all elements of $A$ are parallel to each other but not parallel to $e^*$. Let $\GRE$ be the output of \textsc{Greedy-Improving}. The previous discussion implies that
\[
\Pr[e^*\in \GRE] \geq \Pr[z'\in 2^*1]+\Pr[z'\in 2^*12] + \Pr[z'\in 2^*11 \wedge \text{$e_{|z'|}$ is not parallel to $e^*$}].
\]
Recall that $a(2,p)=\Pr[z'\text{ is well-indexed}] = \Pr[z'\in 2^*1]+\Pr[z'\in 2^*12]$, so we want to understand in which situations the extra term above is strictly positive.

In what follows, we will use an alternative analysis in which we condition on the arrival time $t$ of element $e^*$. Let $\{g,h\}=\OPT(E_t)$ be the optimal set immediately before the arrival of $e^*$, and $\OPT(E_t^+)=\OPT(E_t\cup \{e^*\})=\{g,e^*\}$. In particular, $h$ is the element of $\OPT(E_t)$ that $e^*$ replaces on arrival, and $g$ is not parallel to $e^*$. We define a new labeling scheme $\Lambda_\tau$ only for the interval $[p,t)$. The order $\tau$ is chosen to be any total order of $E_t$ having $g$ as its first element. In particular, if $p<s<t$ and there is no improving time in $[s,t)$, then $\OPT(E_s) = \OPT(E_t) = \{g,h\}$ and the labels of $g$ and $h$ at time $s$ are 1 and 2, respectively.

Let $z$  denote the improving word on $[p,t)$ with respect to $\Lambda_\tau$. Note that if $z=\epsilon$ is the empty word, then $e^*$ is selected by \textsc{Greedy-Improving}. If $z=1$, then $g$ is the only improving element that arrived before $e^*$, so $\GRE=\{e^*,g\}$. If $z=2$, then $e^*$ is selected by the algorithm if and only if $h$ is not parallel to $e^*$. The algorithm might select $e^*$ for longer improving words as well, but we only focus on these three scenarios. 

For each $s\in [0,1]$, let $\Para(s)$ denote  the event that $\OPT(E_s^+ - \{e^*\})$ contains an element parallel to $e^*$, and let $\alpha(s)=\Pr[\Para(s)]$. Based on the facts that the arrival time of $e^*$ is uniform in $[0,1]$, that the length of $z$ behaves like a Poisson random variable with rate $\lambda = 2 \ln\prn{{t}/{p}}$, and that each symbol of $z$ is chosen uniform from the set $\{1,2\}$, the above discussion implies that
\begin{align}
\Pr[e^*\in \GRE]&=  \int_{p}^1 \Pr[e^*\in \GRE | t_{e^*}=t ]\diff t\nonumber\\
&\geq \int_p^1 \left(\Pr[z=\epsilon]  + \Pr[z=1] + \Pr[z=2| \neg\Para(t)]\Pr[\neg\Para(t)] \right) \diff t\nonumber\\
&= \int_p^1 \left(e^{-\lambda} + \frac{1}{2}\frac{e^{-\lambda}\lambda}{1!}+ 
 \frac{1}{2}\frac{e^{-\lambda}\lambda^1}{1!}(1-\alpha(t)) \right)\diff t\nonumber\\
&= \int_p^1 \prn{ \prn{\frac{p}{t}}^2 + \prn{\frac{p}{t}}^2 \ln\prn{\frac{t}{p}} + \prn{\frac{p}{t}}^2 \ln\prn{\frac{t}{p}} (1-\alpha(t)) } \diff t \label{eq:greedyrank2}\\
&= p (2 - 2 p + p\ln(p)) + \int_p^1 \prn{\frac{p}{t}}^2 \ln\prn{\frac{t}{p}} (1-\alpha(t)) \diff t, \nonumber
\end{align}
where the third line follows since for fixed $t$, the event $z = 2$ has the same probability independent from $\neg \Para(t)$, so $\Pr\brk{z=2 \midd \neg\Para(t)} = \Pr[z=2]$.

For certain rank-2 matroids and value orders, $(1-\alpha(t))$ is equal to 1 as a function of $t$ on the interval $[p,1]$ (e.g., for rank-2 uniform matroids), while for others the function is close to 0. Motivated by this observation, we propose a different algorithm that has higher probability of accepting $e^*$ whenever $(1-\alpha(t))$ is small, i.e., when it is likely that $\OPT(E_{t})$ contains an element parallel to $e^*$. At the same time, our goal is not to loose too much on the case when $\OPT(E_t)$ does not contain an element parallel to $e^*$. The algorithm, called \textsc{Oblivious-Partition}, is presented as Algorithm~\ref{alg:oblivious-rank2}.

\begin{algorithm}[H]
    \begin{algorithmic}[1]
    \State Initialize $\ALG_1 \gets \emptyset$.
    \State Skip all elements with arrival time in $[0,p)$.
    \State Let $\OPT(E_p)=\{g_1,g_2\}$, with $g_1\succ g_2$.
    Let $C_1$ be the parallel class of $g_1$.
    \State Include into $\ALG_1$ the first improving element $f_1$ arriving in $[p,1]$ such that $f_1$ is parallel to $g_1$, and the first improving element $f_2$ arriving in $[p,1]$ such that $f_2$ is not parallel to $g_1$.
    \State Return $\ALG_1$.
    \end{algorithmic}
    \caption{\textsc{Oblivious-Partition Algorithm  for Rank-2 Matroids}}
    \label{alg:oblivious-rank2}
\end{algorithm}
Let $\PAR$ be the output of the algorithm. For the analysis, we choose $e^*\in \OPT(E)$ with arriving time $t=t_{e^*}\in [p,1]$, denote by $\OPT(E_t^+)=\{g,e^*\}$, and consider as before a labeling scheme $\Lambda_\tau$ where $\tau$ is any total order of $E_t$ having $g$ as its first element. Similarly as before, let $z$ denote the improving word on $[p,t]$ with respect to $\Lambda_\tau$. We distinguish two cases.\\

\noindent{\bf Case 1: $\Para(t)$ holds.} In this case, $\OPT(E_t)=\{g,h\}$ is such that $h$ is parallel to $e^*$. If $h$ arrives before time $p$ then it is also part of $\OPT(E_p)$, and the word $z$ does not contain a symbol 1. In fact, $h$ is contained in $\OPT(E_s)$ for all $s\in [p,t]$, therefore no other improving element parallel to $h$ will appear in $[p,t)$. Here we have two sub-cases depending on whether $h$ is $g_1$ or $g_2$.

If $h=g_1$, then $e^*$ is selected by the algorithm as the first improving element parallel to $g_1$ that arrives. If $h = g_2$, then $g_1\succ h$. This means that $\OPT(E_t)$ consists of the elements $h$ and $g$, where $g\succ h$ in the value order. In fact, since the matroid has rank 2, every element $f$ in $E_t$ that satisfies $f\succ h$ must be parallel to $g$. In particular, $g_1$ is parallel to $g$. Observe also that if $f'$ is an improving element arriving at time $s\in [p,t)$, then $\OPT(E_s^+)=\{f',h\}$. Since  $\{g_1,h\}$  is an independent subset of $E_{s}^+$, we conclude that $f'\succ g_1\succ h$, which implies that $f'$ is parallel to $g$, and so it is parallel to $g_1$ as well. This means that $e^*$ is the first improving element arriving that is not parallel to $g_1$, thus it is selected by the algorithm.

Summarizing, if $\Para(t)$ holds and $h$ arrives before time $p$, then $e^*$ is selected. Therefore, the events $\Para(t)$ and $z\in 2^*$ together imply that $e^*$ is selected.\\

\noindent{\bf Case 2: $\Para(t)$ does not hold.} 
In this case, $\OPT(E_t)=\{g,h\}$, $\OPT(E_t^+)=\{g,e^*\}$ and $g$, $h$ and $e^*$ are all from different classes. Note that if no improving elements arrive in $[p,t)$, then $e^*$ is selected by the algorithm as $f_2$. This means that the events $\neg \Para(t)$ and $z=\epsilon$ together imply that $e^*$ is selected.

To conclude, note that $z\in 2^*$ if and only if its restriction $z_1$ to alphabet $1$ is empty. Recall that $|z_1|$ and $|z|$ both follow Poisson distributions with parameters $\lambda_1=\ln\prn{{t}/{p}}$ and $\lambda=2 \ln\prn{{t}/{p}}$, respectively. We conclude that
\begin{align}
\Pr[e^*\in \PAR]&=  \int_{p}^1 \Pr[e^*\in \PAR \mid t_{e^*}=t ]\diff t\nonumber \\
&\geq \int_p^1 \prn{\Pr\brk{z_1=\epsilon \mid \Para(t)} \Pr[\Para(t)]  + \Pr\brk{z=\epsilon \mid \neg\Para(t)} \Pr[\neg\Para(t)] } \diff t \nonumber  \\
& = \int_p^1 \prn{ e^{-\lambda_1} \alpha(t) + e^{-\lambda} (1 - \alpha(t)) } \diff t \nonumber \\
& = \int_p^1 \prn{ \prn{\frac{p}{t}} \alpha(t) + \prn{\frac{p}{t}}^2 (1-\alpha(t)) } \diff t \label{eq:Partrank2},
\end{align}
where the third line follows since the words $z_1$ and $z$ are independent from  $\Para(t)$.\\

\noindent{\bf The Mixture Algorithm.} The above analysis shows that \textsc{Greedy-Improving} performs well when $\Pr[\neg\Para(t)]$ is large, while \textsc{Oblivious-Partition} performs well when $\Pr[\Para(t)]$ is large. The idea is to combine these two algorithms to get a better probability-competitive guarantee. Therefore, our final algorithm for rank-2 matroids is a mixture of the two. The algorithm, called \textsc{Mixture}, is presented as Algorithm~\ref{alg:mixture-rank2}.

\begin{algorithm}[H]
    \begin{algorithmic}[1]
\State Fix the sample size $p \in (0,1)$.

\State With probability $\eps$, run \textsc{Oblivious-Partition} and return $\ALG_1$.

\State Otherwise, run \textsc{Greedy-Improving} and return $\ALG_0$.
    \end{algorithmic}
    \caption{\textsc{Mixture Algorithm for Rank-2 Matroids}}
    \label{alg:mixture-rank2}
\end{algorithm}

\begin{proof}[Proof of Theorem~\ref{thm:rank-2-intro}]
We choose $e^*\in \OPT(E)$ with arriving time $t=t_{e^*}\in [p,1]$. Let $\ALG$ denote the output of Algorithm~\ref{alg:mixture-rank2}. Then, by \eqref{eq:greedyrank2} and \eqref{eq:Partrank2}, we conclude that 
\begin{align*}
    &\Pr[e^*\in \ALG] \\ 
    &\geq (1-\eps) p (2-2p+p\ln(p)) + \int_p^1\alpha(t)\cdot \eps \frac{p}{t} \diff t\\
    &\quad + \int_p^1 (1-\alpha(t)) \prn{ (1-\eps) \prn{\frac{p}{t}}^2\ln\prn{\frac{t}{p}} + \eps \prn{\frac{p}{t}}^2 } \diff t \\
    &\geq 
    (1-\eps) p (2-2p+p\ln(p))+ \int_p^1 \min\prn{ \eps \frac{p}{t}, (1-\eps) \prn{\prn{\frac{p}{t}}^2 \ln\prn{\frac{t}{p}} + \eps \prn{\frac{p}{t}}^2 } } \diff t.
    \end{align*}
Fix $p\in [0,1]$ and consider the expressions inside the minimum, that is,
\begin{align*}
G_1(\eps,t) &= \eps \frac{p}{t}, \\
G_2(\eps,t) &= (1-\eps) \prn{\frac{p}{t}}^2 \ln\prn{\frac{t}{p}} + \eps \prn{\frac{p}{t}}^2= \prn{\frac{p}{t}}^2 \ln\prn{\frac{t}{p}} +\eps \prn{1 - \ln\prn{\frac{t}{p}} } \prn{\frac{p}{t}}^2.
\end{align*}

Note that for every given value $q > p$, both $G_1(\eps,q)$ and $G_2(\eps,q)$ are linear functions in $\eps$.
Furthermore, we have $G_1(0,q) = 0 < G_2(0,q) = \prn{{p}/{q}}^2 \ln\prn{{q}/{p}}$. Therefore, $G_1(\eps,q) \leq G_2(\eps,q)$ if and only if $\eps \leq {p\ln\prn{{q}/{p}}}/{(q - p + p \ln\prn{{q}/{p}})}$.
Let $q_\eps > p$ be the unique solution of $\eps \prn{q_\eps -p + p \ln\prn{{q_{\eps}}/{p}}} = p \ln\prn{{q_{\eps}}/{p}}$. Then, 
\[
\min\prn{G_1(\eps,t), G_2(\eps,t)} = 
\begin{cases} 
G_1(\eps,t) &\text{ if $p < t < q_\eps$,} \\
G_2(\eps,t) &\text{ if $t \geq q_\eps$.}
\end{cases}
\]
In particular, the lower bound for $\Pr[e^*\in\ALG]$ can be rewritten as
\begin{eqnarray*}
(1-\eps) p (2-2p+p\ln(p)) + \int_p^{\min(q_\eps,1)} \eps \frac{p}{t} \diff t 
+ \int_{\min(q_\eps,1)}^1 \prn{ (1-\eps)\prn{\frac{p}{t}}^2 \ln\prn{\frac{t}{p}} + \eps \prn{\frac{p}{t}}^2 } \diff t.
\end{eqnarray*}
The value of this expression is equal to
\[
\begin{cases}
(1-\eps) p(2-2p+p\ln p) + \eps (p \ln(p)) & \text{if } q_\eps \geq 1 \vspace{1em} \\

\begin{aligned}
&(1-\eps) \prn{ p(2-2p+p\ln(p)) + \frac{p^2}{q_\eps} (1-q_\eps)(1-\ln(p)) + \frac{p^2}{q_\eps} \ln(q_\eps) } \\
& \quad + \eps \prn{ \frac{p^2}{q_\eps} (1-q_\eps) + p \ln\prn{\frac{q_\eps}{p}} }
\end{aligned}
& \text{if } q_\eps\in (p,1)
\end{cases}
\]
Note that whenever $q_\eps > 1$, then we recover a ratio that is at most $p(2-2p+p\ln p)$, which is a lower bound for the ratio of \textsc{Greedy-Improving}, so the best ratio is achieved for $q_\eps\leq 1$. To find the maximum, we set 
$\eps = {p \ln\prn{{q}/{p}}}/{(-p+q+\ln\prn{{q}/{p}})}$ and obtain an explicit maximization  problem in two variables $p \leq q \leq 1$. The maximum can be computed numerically to get $p\approx 0.4067$, $q\approx 0.9194$, $\eps\approx 0.3928$, showing a probability-competitive ratio of $0.3462$ for \textsc{Mixture}.
\end{proof}
%%%%%%%%%%%%%%%%
%%%%%%%%%%%%%%%%
\section{Graphic Matroids}
\label{sec:graphic}
%%%%%%%%%%%%%%%%

Given a graph with edges $E$ as the ground set, the independent sets in the {\it graphic matroid} are edge sets without cycles, i.e., $\calI=\{F\subseteq E\colon F\ \text{is acyclic}\}$. In this section, we use our labeling schemes framework to improve the state-of-the-art competitive ratio for the graphic matroid secretary. Our main result is the following.
\begin{theorem}\label{thm:graphic-intro}
For the MSP in graphic matroids, the following holds:
\begin{enumerate}[label=\normalfont(\roman*)]
    \item There is a $0.2693$-probability-competitive algorithm for simple graphs.\label{thm:graphic-simple}
    \item There is a $0.2504$-probability-competitive algorithm for general graphs.\label{thm:graphic-general}
\end{enumerate}
\end{theorem}

To prove the theorem, in Section~\ref{sec:basic} we first describe a basic algorithm that uses a simple labeling scheme and can be considered a variant of the 2-forbidden algorithm of~\cite{forbidden-paper}. In Section~\ref{sec:generation}, we introduce one of the key notions of this section, the {\it generation} of an edge, which suggests a new algorithm that we analyze through our labeling scheme framework to show the improved $0.2693$-probability-competitiveness for simple graphs, i.e., with no parallel edges. To tackle the case of general graphs, we combine this algorithm with another that performs well on different types of instances, breaking the $1/4$ barrier. 
%All missing proofs can be found in the full version of this paper.

%%%%%%%%%%%%%%%%
\subsection{Basic Algorithm and Auxiliary Digraph}
\label{sec:basic}
%%%%%%%%%%%%%%%%

Let $G'=(V',E')$ denote the input graph of the graphic MSP. For graphic matroids, we can implement Assumption~\ref{asm:copies} with the following construction.

\begin{assumption} \label{asm:graphic} 
The input graph of our algorithm is the augmented graph $G=(V+w, E)$ with $E=E'\cup F$, where $w$ is a new node considered a {\it root} and $F$ is a collection of dummy edges consisting of an infinite number of parallel copies of the edges $\{wv\colon v\in V\}$. These dummy edges have infinitesimal weight. 
\end{assumption}

Recall that every edge $e$ in $E=E'\cup F$ picks its arrival time $t_e$ from the uniform distribution over $[0,1]$. Under Assumption~\ref{asm:graphic}, for every $t>0$, the set $E_t$ of all edges in $E=E'\cup F$ arriving in the interval $[0,t)$ has full rank $r=(|V+w|-1)=|V|$, and thus $\OPT(E_t)$ is a tree with vertex set $V+w$. 

\begin{definition}[Canonical Orientation of an Improving Edge]
The arborescence $A(t)$, resp. $A(t^+)$, obtained by orienting the tree $\OPT(E_t)$, resp. $\OPT(E_t^+)$, away from the root $w$ is called the canonical orientation of the optimum at $t$, resp. $t^+$. For an edge $e\in E$, we say that $e$ is oriented from $u$ to $v$ at time $t$ if $e\in \OPT(E_t^+)$ and $e$ is oriented as $(u,v)$ in $A(t^+)$. 
\end{definition}

Note that the orientation of an edge changes from improving time to improving time and that only improving elements are oriented. Indeed, if $e=\{u,v\}$ is an edge arriving at time $t_e$, then, at that time, it receives an orientation $(u,v)$ according to the arborescence $A(t_e^+)$. When the next improving edge arrives, say $f$, the optimum tree $\OPT(E_{t_f}^+)$ is different from $\OPT(E_{t_e}^+)$ since $f$ enters and some other edge leaves the tree. If $e$ is still part of the optimum tree, then its orientation may have changed; see Fig.~\ref{fig:arb} for an illustration.
\begin{figure}[t]
    \centering
    \includegraphics[width=\textwidth]{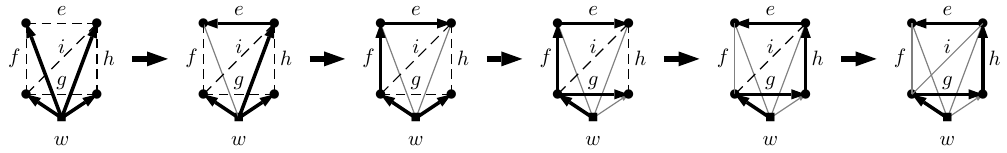}
    \caption{An example for the canonical orientation of improving edges. Vertex $w$ is chosen as the root, the original edges of the canonical orientation are drawn with thick lines, and a set $\{e,f,g,h,i\}$ arrives in the order $e,f,g,h,i$, with ranking $e > g > h > i > f$. Note that $e,f,g$ and $h$ are improving edges whereas $i$ is not, the orientation of $e$ changes several times, and $f$ leaves the optimal tree upon the arrival of $h$.}
    \label{fig:arb}
\end{figure}
The following algorithm is a variant of the 2-forbidden algorithm of~\cite{forbidden-paper} and is the foundation of our algorithms for graphic matroids. 
%Algorithm~\ref{alg:aux-graphic} is $p(1-p)$-probability-competitive; see the full version for details. In particular, $1/4$-probability-competitive for~$p = 1/2$.
\begin{algorithm}[H]
    \begin{algorithmic}[1]
    \State Initialize $\ALG_0 \gets \emptyset$.
    \State Initialize $\AUX \gets$ digraph $(V+w,\emptyset)$.
    \State Skip all edges with arrival time in $[0,p)$.
   \For{each edge $e$ with $t_e\in [p,1]$, in arrival order}
        \If{$e \in \OPT(E_{t_e}^+)$ is oriented from $u$ to $v$ at time $t_e$ and $\deg^-_{\AUX}(v)=0$}
            \State $\AUX \gets \AUX + (u,v)$
            \If{$\deg^-_{\AUX}(u)=0$}
                \State $\ALG_0 \gets \ALG_0 + e$
            \EndIf    
        \EndIf
    \EndFor    
    \State Return $\ALG_0$.
    \end{algorithmic}
    \caption{\textsc{Basic Algorithm}}
    \label{alg:aux-graphic}
\end{algorithm}
\vspace{.5cm}
\begin{theorem}\label{thm:basiccompetitive} 
Algorithm~\ref{alg:aux-graphic} is $p(1-p)$-probability-competitive. 
In particular, it is $1/4$-probability-competitive when $p = 1/2$.
\end{theorem}

To show the above theorem, we first show, through the following lemmas, that Algorithm~\ref{alg:aux-graphic} returns a feasible solution.

\begin{lemma}
At every step, the digraph $\AUX$ has maximum in-degree at most one. Therefore, every cycle of the underlying undirected graph has to be a directed cycle in $\AUX$, and $\AUX$ has at most one directed cycle per connected component. 
\end{lemma}
\begin{proof}
    An edge $e$ oriented from $u$ to $v$ is added to $\AUX$ only if the in-degree of $v$ is 0, showing the first half of the statement. The second half is easy to see and is a folklore result in graph theory.  
\end{proof}

\begin{lemma}
$\ALG_0$ is an independent set in the graphic matroid.
\end{lemma}
\begin{proof}
Since $\ALG_0$ is a subset of the underlying graph of $\AUX$, the only possibility for a cycle $C$ to be present in $\ALG_0$ is that $C$ corresponds to some directed cycle in $\AUX$. 
    However, the last arriving edge $e$, with orientation  $(u,v)$ in $\AUX$, cannot enter $\ALG_0$ since its tail $u$ has, at that moment, in-degree one in $\AUX$. 
    Therefore, there are no cycles in $\ALG_0$.   
\end{proof}
\noindent{\bf Labeling Scheme $\Lambda_0$.} We now present the labeling scheme we use to analyze Algorithm~\ref{alg:aux-graphic}.
Let $e^*\in \OPT(E)$ be an optimal edge. In particular, $e^*\in E'$ is not a dummy edge. Let $t^*=t_{e^*}$ be its arrival time and $\vec{e}=(u^*,v^*)$ be the orientation of $e^*$ at time $t^*$. We are ready to define the labeling scheme that we use to analyze the basic algorithm.

%\begin{definition}[Labeling Scheme $\Lambda_0(e^*)$]
On the interval $[t^*,1]$, we use the usual labeling scheme $\Lambda_\pi$ induced by any total order of $E$ such that $\pi_1=e^*$. For times $s<t_{e^*}$, we use a different labeling that depends on $u^*$ and $v^*$.  Recall that for every vertex $v\in V$, there is exactly one edge oriented toward $v$ at time $s$, namely the unique edge in $\OPT(E_s)$ whose associated arc in $A(s)$ is oriented toward $v$. We assign label 1 to the edge oriented toward the head $v^*$ of $e^*$, and label 2 to the edge oriented toward the tail $u^*$. The remaining $r-2$ labels are assigned arbitrarily to the other elements of $\OPT(E_s)$.
%\end{definition} 

Denote by $\cL_{\text{basic}}=\{z\in [r]^*\colon z=x1y$, where $x$ is a word not containing the symbol 1 and $y$ is a word not containing the symbols 1 and 2$\}$. The next lemma connects Algorithm~\ref{alg:aux-graphic} with this language.

\begin{lemma}\label{lem:wordbasic}
Let $z$ be the improving word of $e^*\in\OPT(E)$ produced by the labeling scheme $\Lambda_{0}(e^*)$. If $z\in \cL_{\normalfont\text{basic}}$ then $e^*\in \ALG_0$.
\end{lemma}
\begin{proof}
By Lemma~\ref{lem:standardlabel}, the arrival time $t^*$  of $e^*$ is in $[p,1]$ if and only if there exists a symbol 1 in $z$. Furthermore, all the symbols after the first 1 are labels of improving edges arriving in $[p,t^*)$. Observe that $\vec{e}\in \AUX$ if and only if $t^*\in [p,1]$ and no improving edge arriving in $[p,t^*)$ is oriented toward $v^*$ upon arrival, which corresponds to the condition that there is a unique symbol 1 in $z$ by the definition of the dynamic labeling we are using.

Observe that $e^*\in \ALG_0$ if and only if $\vec{e}\in \AUX$ and no improving edge arriving in $[p,t^*)$ is oriented toward $u^*$ upon arrival. By the previous paragraph and the labeling we use, this is equivalent to $z=x1y$, where $x$ is a word without symbols in $\{1\}$ and $y$ is a word without symbols in $\{1,2\}$.
\end{proof}

We are now ready to show Theorem~\ref{thm:basiccompetitive}.

\begin{proof}[Proof of Theorem~\ref{thm:basiccompetitive}]
Let $e^*\in \OPT(E)$ and $z$ be the improving word produced by the associated labeling scheme $\Lambda_0(e^*)$
in the interval $[p,1]$. Instead of looking at the produced word $z\in [r]^*$ directly, we look at the subsequences $z_{\{1,2\}}$, $z_{\{1\}}$ and $z_{\{2\}}$ obtained by restricting $z$ to alphabets $\{1,2\}$, $\{1\}$ and $\{2\}$, respectively. 

By Lemma~\ref{lem:words}, $z_{\{1\}}\in 1^*$ and $z_{\{2\}}\in 2^*$ are random words whose length is distributed as a Poisson with parameter $\lambda_1=\log(1/p)$. By Lemma~\ref{lem:wordbasic}, we get
\begin{align*}
\Pr[e^*\in \ALG_0]&\geq \Pr[z\in ([r]\setminus \{1\})^* 1 ([r]\setminus \{1,2\})] \\
&= \Pr[z_{\{1,2\}}\in 2^*1]\\
& = \Pr[z_{\{1,2\}}\in 1^*1] \\
& = \Pr[z_2=\epsilon]\cdot \Pr[z_1\neq \epsilon]= e^{-\lambda_1}(1-e^{-\lambda_1})=p(1-p),
\end{align*}
where the third line holds since generating the word $2^k1$ has the same probability than generating the word $1^k1$ for all $k\in \NN$, while the fourth line holds by the independence of words $z_1$ and $z_2$ given by Lemma~\ref{lem:words}.
\end{proof}

%%%%%%%%%%%%%%%%
\subsection{Cycles and Generations of the Auxiliary Digraph}
\label{sec:generation}
%%%%%%%%%%%%%%%%

Let $G'=(V,E')$ be an undirected, not necessarily simple graph, and $G=(V+w, E)$ with $E=E'\cup F$ be defined as before. In the following, we study the cycles of the auxiliary digraph $\AUX$ in Algorithm~\ref{alg:aux-graphic} when applied to the augmented graph $G$. 
For that, enumerate all arcs in $\AUX$ in the order in which they entered as $(u_1,v_1), \dots (u_k,v_k)$. Note that $\{v_1,\dots, v_k\}\subseteq V$ and they are all different, while $\{u_1,\dots, u_k\}\subseteq V+w$ and they might repeat.
For every time $t\in [0,1]$, let $\AUX(t)$ be the sub-digraph of $\AUX$ consisting of all arcs that entered $\ALG_0$ in the interval $[0,t]$. 

\begin{definition}[Generation of an arc in $\AUX$]
Let $e$ be an edge such that its orientation $\vec{e}=(u,v)$ entered $\AUX$ at time $t$. The \emph{generation} $\Gen(e)\in \N$ is defined as follows. If $e\in F$ is a dummy edge, we set $\Gen(e)=1$. If $e\in E'$ and $\deg^-_{\AUX(t)}(u)=0$, we set $\Gen(e)=0$. Otherwise, there is a unique arc $\vec{f}$ with head $u$ that entered $\AUX$ at a time $s<t$, and we set $\Gen(e)=\Gen(f)+1$.
\end{definition}

Let $C\subseteq E'\cup F$ be a cycle in $\AUX$. No edge in $C$ can be in $F$, since the edges of $F$ are oriented away from $w$ by construction, and no arc in $\AUX$ is directed toward $w$. 
Fig.~\ref{fig:cycle} shows an example of a cycle and the generations of its edges.
Also, the edges $e$ that enter $\ALG_0$ are exactly those with $\Gen(e)=0$.
\begin{SCfigure}[50][t]
\centering
\begin{minipage}[b]{0.35\textwidth}
    \centering
        \begin{tikzpicture}[scale=.95] % Adjust the scale as needed
        % Define the coordinates for the 5-cycle
        \node (A) at (90:2) {\scriptsize$w_1$};
        \node (B) at (18:2) {\scriptsize$w_3$};
        \node (C) at (306:2) {\scriptsize$w_4$};
        \node (D) at (234:2) {\scriptsize$w_2$};
        \node (E) at (162:2) {\scriptsize$w_5$};

        % Draw the directed edges
        \draw[->] (A) -- (B) node[midway, above right] {\scriptsize$f_3, 1$};
        \draw[->] (B) -- (C) node[midway, below right] {\scriptsize$f_4, 2$};
        \draw[->] (C) -- (D) node[midway, below] {\scriptsize$f_2, 0$};
        \draw[->] (D) -- (E) node[midway, below left] {\scriptsize$f_5,1$};
        \draw[->] (E) -- (A) node[midway, above left] {\scriptsize$f_1,0$};
    \end{tikzpicture}
\end{minipage}%
\caption{A directed 5-cycle in $\AUX$ arriving in the order $f_1, f_2, f_3, f_4, f_5$; each arc is marked as $f_i, \Gen(f_i)$.}
\label{fig:cycle}
\end{SCfigure}
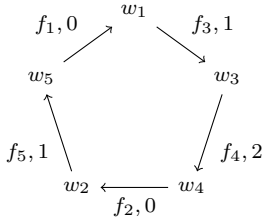
\begin{lemma}\label{lem:generation1} 
Let $\AUX'$ be the sub-digraph of $\AUX$ obtained by deleting all edges with generation equal exactly to one. Then $\AUX'$ is acyclic.
\end{lemma}
\begin{proof}
   Suppose to the contrary that $\AUX'$ contains a directed cycle $C$ and let $f=(u_1,u_2)$ be the first arriving arc in $C$. Let $h=(u_k,u_1)$ be the previous arc of $C$ in cyclic order. Since $h$ is added to $\AUX$, no arc entered its head $u_1$ at its arrival. Consequently,  no arc entered the tail of $f$ upon its arrival. Hence, the generation of $f$ is $0$. This implies that the generation of the next arc $g=(u_2,u_3)$ in cyclic order is $1$, contradicting the assumption that $\AUX'$ has no edges with generation equal to $1$. Note that this proof works even if $k=2$, that is when $h=g$.
\end{proof}
The previous discussion and the fact that the generation of each arc can be computed easily at its inclusion to $\AUX$ suggest the following algorithm.
\begin{algorithm}[h]
    \begin{algorithmic}[1]
    \State $\ALG_1 \gets \emptyset$
    \State Initialize $\AUX \gets$ digraph $(V+w,\emptyset)$.
    \State Skip all edges with arrival time in $[0,p)$.
   \For{each edge $e$ with $t_e\in [p,1]$, in arrival order}
        \If{$e \in \OPT(E_{t_e}^+)$ is oriented from $u$ to $v$ at time $t_e$ and $\deg^-_{\AUX}(v)=0$}
            \State $\AUX \gets \AUX + (u,v)$
            \If{$u\neq w$}
                \If{$\deg^-_{\AUX}(u)=0$}
                \State $\ALG_1 \gets \ALG_1 +   e$
                \Else{ let $f=(u',u)$ be the unique arc in $\AUX$ pointing toward $u$}
                \If{$u'=w$ or $\deg^-_{\AUX(t_f)}(u')=1$}
                    \State $\ALG_1 \gets \ALG_1 +e$
                \EndIf    
            \EndIf    
        \EndIf
        \EndIf
    \EndFor    
    \State Return $\ALG_1$.
    \end{algorithmic}
    \caption{\textsc{Generation Algorithm}}
    \label{alg:generation}
\end{algorithm}
\begin{lemma}\label{lem:feasiblegen}
Algorithm~\ref{alg:generation} outputs a set $\ALG_1$ that is independent.
\end{lemma}
\begin{proof}
Note that dummy edges $e\in F$ do not enter $\ALG_1$. For a non-dummy edge $e$, the condition $\deg^-_{\AUX}(u)=0$ is satisfied if and only if $\Gen(e)=0$. The condition $u'=w$ is satisfied if and only if $f$ is a dummy edge, $\Gen(f)=1$, and  $\Gen(e)=2$. Finally, the condition $u'\neq w$ and  $\deg^-_{\AUX(t_f)}(u)=1$ holds if and only if $f$ is not a dummy edge, $\Gen(f)\geq 1$ and $\Gen(e)=\Gen(f)+1\geq 2$. It follows that $\ALG_1$  contains exactly the edges with generation different from 1. By Lemma~\ref{lem:generation1}, $\ALG_1$ is acyclic.
\end{proof}
We show that if $G$ does not contain parallel edges, the competitive ratio of \textsc{Generation} is strictly better than $1/4$. Let $G'=(V,E')$ be a simple graph. Since the case $|V|=2$ is straightforward, we assume that $|V|\geq 3$. Let $\AUX$ and $\ALG_1$ be obtained by applying Algorithm~\ref{alg:generation} to $G=(V+w,E'\cup F)$, and let $e^*\in \OPT(E)$ be an optimal edge arriving at a time  $t^*=t_{e^*}$. \\

\noindent{\bf Labeling Scheme $\Lambda_1$.}
We now introduce the labeling scheme we use to analyze the \textsc{Generation Algorithm}.
On the time interval $[t^*,1]$, we use the usual labeling scheme $\Lambda_\pi$ induced by any total order of $E$ such that $\pi_1=e^*$. Suppose that $e^*$ is oriented from a vertex $u^*$ to a vertex $v^*$ at time $t^*$, and let $w^*_0$ be an arbitrary vertex in $V\setminus \{u^*,v^*\}$. For $s<t^*$, the labeling scheme is as follows: let $\vec{f}=(u',u^*)$ be the earliest improving arc in $(s,1]$ having head $u^*$. If $\vec{f}$ exists and $u'\neq w$, then we set $w^*_s= u'$; since the original graph $G'=(V,E')$ is simple, $f$ is not parallel to $e^*$, showing that $u'\neq v^*$ holds as well. Otherwise, we set $w^*_s=w^*_0$. In any case, the vertices $v^*$, $u^*$, and $w^*_s$ are different. Recall that for any vertex $v\in V$, there is a unique edge in $\OPT(E_s)$ that is oriented toward $v$. Then, at time $s$, we assign label 1 to the edge oriented toward $v^*$, label 2 to the edge oriented toward $u^*$, and label 3 to the edge oriented toward $w^*_s$. The remaining $(r-3)$ labels are assigned arbitrarily to the other elements of $\OPT(E_s)$.     

Denote by $\cL_{\text{generation}}=\{z\in [r]^*\colon z_{\{1,2,3\}}=x1y$, where $x$ and $y$ are words not containing the symbol 1 and $y$ is a word that does not end with symbol 2$\}$. Here, $z_{\{1,2,3\}}$ denotes the subsequence of $z$ restricted to the alphabet $\{1,2,3\}$.
The following lemma connects Algorithm~\ref{alg:generation} with this language.
\begin{lemma}\label{lem:graphic-generation}
Let $z$ be the improving word of $e^*\in\OPT(E)$ produced by $\Lambda_{1}(e^*)$. If $z\in \cL_{\normalfont\text{generation}}$ then $e^*\in \AUX$ and $\Gen(e^*)\neq 1$. In particular, $e^*\in \ALG_1$.
\end{lemma}
\begin{proof}
Let $z'=z_{\{1,2,3\}}$ and suppose that $z\in \cL_{\text{generation}}$. That is $z'=x1y$; $x,y\in \{2,3\}^*$ and $y$ does not end with symbol 2. Let $\vec{e}=(u^*,v^*)$ with $u^*\neq w$ be the orientation of $e^*$ upon arrival and $t^*\in (0,1)$ be its arrival time. Since $z'$ has a unique 1, the same holds for $z$, implying that $t^*\in[p,1]$ and that $e^*$ is the unique improving edge that is oriented toward $v^*$ in $[p,t^*]$. Therefore, $\vec{e}\in \AUX$. We distinguish two cases based on whether $y$ contains a symbol $2$ or not.

If $y$ contains no symbol 2 at all, then no improving edge arriving in $[p,t^*)$ is oriented toward $u^*$ or $v^*$ in $[p,t^*)$. Since $e^*$ is not a dummy edge, we get $\Gen(e^*)=0$ and thus $\Gen(e^*)\neq 1$.

If $y$ contains a symbol $2$, then let $f$ be the first improving edge with arrival time $p\leq s<t$ that is oriented toward $u^*$, that is, $f=(u',u^*)$ for some $u'\in (V+w)-u^*$. Note that $f$ is the edge whose label is recorded as the last 2 of $y$. Furthermore, $\Gen(e^*)=\Gen(f)+1$. If $u'=w$ then, by definition, $f$ is a dummy edge in $F$, $\Gen(f)=1$ and $\Gen(e^*)=2$. So assume that $u'\neq w$, which means that $f\in E'$. Since the original graph $G'=(V,E')$ is simple, $f$ is not parallel to $e^*$, implying $u'\neq v^*$. It follows that, for every $s'$ with $p\leq s'<s$, we have $w^*_{s'}=u'$. Let $g$ be the first improving edge in $[p,s)$ that is oriented toward $u'$. Then its orientation  $\vec{g}=(u'',u')$ is in $\AUX$ and, by the previous discussion, $g$ is the edge whose label is recorded as the last 3 of the word $y$. This means that  $\Gen(e^*)=\Gen(f)+1=\Gen(g)+2\geq 2$, finishing the proof.
\end{proof}

\subsection{Competitiveness Analysis via Labeling Schemes}\label{app:generation-thm}

Using Lemma~\ref{lem:graphic-generation}, we show in \Cref{sec:simple-graphs} that Algorithm~\ref{alg:generation} is $(\frac{1}{4}p(1-p^2) - \frac{1}{2}p\ln(p))$-probability-competitive for simple graphs. in particular, we get a $0.2693$-probability-competitiveness when $p\approx 0.4485$.

Algorithm~\ref{alg:generation} may not achieve a sufficiently high competitiveness on general graphs with parallel edges, since the probability that a given edge $e^*\in \OPT(E)$ has generation $1$ may be much larger in this case. 
In \Cref{sec:general-graphs} we remedy this issue by considering an algorithm suitable for those elements and we combine it with Algorithm~\ref{alg:generation}, leading to an algorithm called \textsc{Mixture}, which is $0.2504$-probability-competitive.

\subsubsection{Simple Graphs: Proof of Theorem \ref{thm:graphic-intro}\ref{thm:graphic-simple}}\label{sec:simple-graphs}
We start by proving our competitiveness result for simple graphs.

\begin{theorem}\label{thm:generation}
Algorithm~\ref{alg:generation} is $(\frac{1}{4}p(1-p^2) - \frac{1}{2}p\ln(p))$-probability-competitive for simple graphs. 
\end{theorem}
In particular, Theorem \ref{thm:graphic-intro}\ref{thm:graphic-simple} follows as a direct corollary of Theorem \ref{thm:generation} since \Cref{alg:generation} is $0.2693$-probability-competitive when $p\approx 0.4485$. 
\begin{proof}[Proof of Theorem~\ref{thm:generation}]
Let $e^*\in \OPT(E)$ and $z$ be the improving word produced by the associated labeling scheme $\Lambda_1(e^*)$
in the interval $[p,1]$. For any subset of labels $S$, let $z_S$ be the restriction of $z$ to the labels in $S$ and set $\lambda_i=  i\ln(1/p)$ for all $i\in \NN$. By Lemma~\ref{lem:graphic-generation},
\begin{align*}
&\Pr[e^*\in \ALG_1]\\
&\geq \Pr[z_{\{1,2,3\}}=x1y\ \text{such that $x,y$ contains no $1$, $y$ does not end with $2$}]\\
&=\Pr[z_{\{1,2,3\}}=x1\ \text{such that $x\in \{2,3\}^*$}]+ \Pr[z_{\{1,2,3\}}=x1y'3\ \text{such that $x, y'\in \{2,3\}^*$}]. 
\end{align*}
We compute both terms separately. First, note that
\begin{align*}
&\Pr[z_{\{1,2,3\}}=x1\ \text{such that $x\in \{2,3\}^*$}]\\
&\quad= \Pr[z_{\{1,2,3\}}=x2\ \text{such that $x\in \{2,3\}^*$}]\\
&\quad= \frac{1}{2}\Pr[z_{\{1,2,3\}}=xc\ \text{such that $x\in \{2,3\}^*$, $c\in \{2,3\}$}] \\
&\quad= \frac{1}{2}\Pr[|z_{\{1\}}|=0]\cdot \Pr[|z_{\{2,3\}}|\geq 1] = \frac{1}{2} e^{-\lambda_1} (1-e^{-\lambda_2})= \frac{1}{2}p(1-p^2),
\end{align*}
where the first and second equalities follow since the symbols of $z$ are uniform in $\{1,2,3\}$, the third equality follows since $z_{\{1\}}$ and $z_{\{2,3\}}$ are independent random variables, and the last equality follows since the length of $z_S$ distributes as a Poisson with parameter $|S|\ln(1/p)$.
By a similar reasoning, the second term is
\begin{align*}
&\Pr[z_{\{1,2,3\}}=x1y'3\ \text{such that $x,y'\in \{2,3\}^*$}]\\
&\quad=
\frac{1}{2}\Pr[z_{\{1,2,3\}}=x1y'c\ \text{such that $x,y'\in \{2,3\}^*,c\in \{2,3\}$}] \\
&\quad= \frac{1}{2}\left(\Pr[z_{\{1,2,3\}}\ \text{has exactly one 1}] - \Pr[z_{\{1,2,3\}}=u1\  \text{such that $u\in \{2,3\}^*$}] \right) \\
&\quad= \frac{1}{2}\left(\Pr[|z_{\{1\}}|=1]  - \frac{1}{2}p(1-p^2)\right)\\
&\quad=\frac{1}{2}\left(\frac{e^{-\lambda_1}\lambda_1}{1!} - \frac{1}{2}p(1-p^2) \right)= \frac{1}{2}(p\ln(1/p)) - \frac{1}{4}p(1-p^2).
\end{align*}
Adding up the two terms, we conclude that $\Pr[e^*\in \ALG_1]\geq \frac{1}{4}p(1-p^2) - \frac{1}{2}p\ln (p).$
This expression attains its maximum at $p\approx 0.4485$, achieving a value of $0.2693$. 
\end{proof}

\subsubsection{General Graphs: Proof of Theorem \ref{thm:graphic-intro}\ref{thm:graphic-general}}\label{sec:general-graphs}
In what follows, we prove the $0.2504$-probability-competitiveness for general graphs. 
Consider the following algorithm:
\begin{algorithm}[H]
    \begin{algorithmic}[1]
    \State Initialize $\ALG_2 \gets \emptyset$.
    \State Skip all edges with arrival time in $[0,p)$.
    \State Let $X^p\subseteq E$ be the set of all parallel copies of the elements in $X=\OPT(E_p^+)$.
   \For{each edge $e$ with $t_e\in [p,1]$ in arrival order}
        \If{$e \in \OPT(E_{t_e}^+)\cap X^p$  and $\ALG_2 + e$ is independent}
                \State $\ALG_2 \gets \ALG_2 + e$
        \EndIf        
    \EndFor
    \State Return $\ALG_2$.
    \end{algorithmic}
    \caption{\textsc{Oblivious Algorithm}}
    \label{alg:parallel}
\end{algorithm}
We are ready to present our algorithm for general graphs. 
We combine \textsc{Generation} with \textsc{Oblivious}, leading to the following algorithm that we call \textsc{Mixture}.
\begin{algorithm}[H]
    \begin{algorithmic}[1]
    \State Fix the sample size $p \in (0,1)$.
    \State With probability $\varepsilon$, run \textsc{Oblivious} and return $\ALG_2$. 
    \State Otherwise, run \textsc{Generation} and return $\ALG_1$.
    \end{algorithmic}
    \caption{\textsc{Mixture Algorithm for Graphic Matroids}}
    \label{alg:final}
\end{algorithm}

As before, we work under Assumption~\ref{asm:graphic}, and we assume that $r\geq 3$.  For the analysis, we run the oblivious and the generation algorithm in parallel, simultaneously constructing $\AUX$, $\ALG_1$ (for \textsc{Generation}), and $\ALG_2$ (for \textsc{Oblivious}). Let $\ALG$ be the output of Algorithm~\ref{alg:final}.

Consider an element $e^*\in \OPT(E)$ with arriving time $t=t_{e^*}\in [p,1]$. For each $s\in [0,1]$, define $\Para(s)$ as the event that $\OPT(E_s^+ - \{e^*\})$ contains an edge  $f$ that is parallel to $e^*$ in the graphic matroid, and let $\alpha(s)=\Pr(\Para(s))$. Furthermore, for $0\leq a<b\leq 1$, define $\Para(a,b)$ as the event that there exists $s\in[a,b)$ for which $\Para(s)$ holds, 
and set $\alpha(a,b)=\Pr[\Para(a,b) | \neg \Para(b)]$.  We split the analysis into two cases, depending on whether $\Para(t)$ holds or not.\\

\noindent{\bf Case 1: $\Para(t)$ holds.}
In this case, there exists an element $f\in \OPT(E_t^+-\{e^*\})=\OPT(E_t)$ that is parallel to $e^*$. The arrival time $t_f$ of $f$ is uniformly distributed on $[0,t]$. If $t_f$ is in the interval $[0,p)$, then we know that $f\in \OPT(E_s)$ for all $s\in [p,t]$. In particular, $f\in \OPT(E_p^+)$ and $e^*$ is the first arriving element that is parallel to $f$ and improving, so it is selected by \textsc{Oblivious}.

Now focus on \textsc{Generation}. Let $t'$ be the largest improving time in $[0,t)$ such that the associated improving edge $e_{t'}$ is oriented toward some vertex in $\{u^*,v^*\}$. In terms of the labeling schemes $\Lambda_0(e^*)$ and $\Lambda_1(e^*)$ we have used so far for graphic matroids, this means that the label of $t'$ is in $\{1,2\}$. Note that if $t'<p$, then there are no arcs in $\AUX$ with head in $\{u^*,v^*\}$ when $e^*$ arrives. This means that $\Gen(e^*)=0$, therefore, $e^*$ is selected by the algorithm. 

Recall that $N_S[a,b)$ denotes the number of improving times in $[a,b)$ with label in $S$. Summarizing the above, by Lemma \ref{lem:Poisson-label}, we deduce that
\begin{align}
\Pr[e^*\in \ALG | \Para(t)]&\geq \varepsilon\Pr[t_f\in [0,p) \,|\, f\in \OPT(E_t)] + (1-\varepsilon)\Pr[t'< p | t^*=t]\nonumber\\
&=\varepsilon \Pr[N_{\{1\}}[p,t)=0]+ (1-\varepsilon) \Pr[N_{\{1,2\}}[p,t)=0]\nonumber\\
&= \varepsilon (p/t) + (1-\varepsilon) (p/t)^2. \label{eq:finalparallel}
\end{align}

\noindent{\bf Case 2: $\Para(t)$ does not hold.}
In this case, define $t'$ as before. 
Suppose first that $t'\geq p$. If $\Para(t',t)$ holds, then there exists a moment $s\in [t',t)$ in which $\Para(s)$ holds, i.e., $\OPT(E_s^+)$ contains an edge $f$ parallel to $e^*$. At that moment, $f$ is oriented toward $u^*$ or $v^*$, implying $t_f\leq t'$. We also conclude that no improving edge parallel to $f$ arrives in $(t_f,t)$. Furthermore, since  at time $t'$ exactly two edges of $\OPT(E_{t'}^+)$ are directed toward $u^*$ and $v^*$, we deduce that $f$ is the element arriving at time $t'$ with probability ${1}/{2}$, and $t_f<t'$ with probability ${1}/{2}$. Then, conditioned on $t_f\neq t'$, $t_f$ is a uniform random variable in $[0,t']$. Moreover, if $t_f\in [0,p)$ then, by the same argument as in Case 1, $e^*$ will be selected by \textsc{Oblivious}. By the above, if $p\leq s<t$, then $\Pr[e^*\in \ALG_2 | \neg \Para(t), t'=s, \Para(t',t)] = \frac{1}{2}(p/s)$. 

If $\Para(t',t)$ does not hold, then, since $\Para(t)$ does not hold by assumption, no improving element parallel to $e^*$ arrives in $[t',t]$. Let $g$ be the improving edge arriving at time $t'$ and $\vec{g}$ be its orientation. With probability ${1}/{2}$, the head of $\vec{g}$ is $u^*$ and with probability ${1}/{2}$ it is $v^*$. Suppose we are in the case $\vec{g}=(u', u^*)$. If $u'$ is the root $w$, then as long as no arc pointing toward $u^*$ or $v^*$ arrives in $[p,t')$, $\vec{g}$ and $\vec{e}$ will both enter $\AUX$, $\Gen(g)=1$ and $\Gen(e^*)=2$, so \textsc{Generation} selects $e^*$. If $u'$ is not the root $w$, then $u'\in V\setminus \{u^*,v^*\}$ as $g$ is not parallel to $e^*$. As long as no improving edge oriented toward $u^*$ or $v^*$ and at least one improving edge $h$ oriented toward $u'$ arrive in $[p,t')$, the arcs $\vec{h}$, $\vec{g}$ and $\vec{e}$ enter $\AUX$ and $\Gen(e)=\Gen(g)+1=\Gen(h)+2\geq 2$, so \textsc{Generation} selects $e^*$. By the above, $\Pr[e^*\in \ALG_1 | \neg \Para(t), t'=s, \neg \Para(t',t)] =  \frac{1}{2}\Pr[N_{\{1,2\}}[p,s)=0]\Pr[N_{\{3\}}[p,s)\geq 1] = \frac{1}{2}(p/s)^2(1-p/s)$. 

Let us now consider the case when $t'< p$. Note that in this case, we immediately have that $\Gen(e^*)=0$ and so $e^*$ is selected by \textsc{Generation}. For \textsc{Oblivious}, since no arc parallel to $e^*$ can possibly arrive in $(t',t)$, $e^*$ is selected by the algorithm as long as $\OPT(E_p^+)$ contains an element parallel to $e^*$, which is exactly event $\Para(p,t)$.

Putting everything together, using the fact that output of $\ALG$ is $\ALG_1$ with probability $(1-\varepsilon)$ and $\ALG_2$ with probability $\varepsilon$, as well as that the probability density function\footnote{ This comes from the fact that $t'$ has the same distribution as $S(t)$ for a rank-2 matroid. See Lemma \ref{lem:Poisson-all}. } of the variable  $t'$ is $f_{t'}(s)=2s/t^2$, we get that
\begin{align}
&\Pr[e^*\in \ALG | \neg \Para(t)]\nonumber\\ 
&\quad= \Pr[e^*\in \ALG | \neg \Para(t), t'<p] (p/t)^2+ \int_{p}^1 \Pr[e^*\in \ALG | \neg \Para(t), t'=s] f_{t'}(s) \diff s\nonumber\\
&\quad= \varepsilon\alpha(p,t)(p/t)^2+(1-\varepsilon)(p/t)^2\nonumber\\
&\qquad+ \int_{p}^t \left(\varepsilon\alpha(s,t)  \frac{1}{2}  (p/s)+(1-\varepsilon)(1-\alpha(s,t))  \frac{1}{2} (p/s)^2(1-p/s)\right) f_{t'}(s) \diff s\nonumber\\
&\quad= \varepsilon\alpha(p,t)(p/t)^2+(1-\varepsilon)(p/t)^2+ (p/t^2)\int_{p}^t \varepsilon\alpha(s,t)+(1-\varepsilon)(1-\alpha(s,t))(p/s)(1-p/s)\diff s.
\label{eq:finalnoparallel}
\end{align}
The bounds on the right-hand side of \eqref{eq:finalparallel} and \eqref{eq:finalnoparallel} are algebraic functions of $p$, $ \varepsilon$, and $t$. They also allude to probabilities  $\alpha(\cdot)$ and $\alpha(\cdot,\cdot)$ of random events, where the only stochastic part comes from the random arrival times of the elements -- they also depend on the adversarial choice of the graph and the value order on the edges.  We get a better understanding of \eqref{eq:finalparallel} and \eqref{eq:finalnoparallel} if we introduce an auxiliary random variable $M\in [0,1]$ that is defined as the maximum of all times $s<t$ such that $\Para(s)$ holds. Then, we can write $\alpha(t)=\Pr[M=t]$, $\alpha(s,t)\alpha(t)=\Pr[s\leq M<t]$ and $\alpha(s,t)(1-\alpha(t))=\Pr[M<s]$. In particular,  we can simplify  \eqref{eq:finalparallel} and \eqref{eq:finalnoparallel} using random variables as follows. For any fixed value $m^*\in [0,1]$, we have
\begin{align*}
&\Pr[e^*\in \ALG | t_{e^*}=t, M=m^*]\\
&\quad=\Pr[e^*\in \ALG | t_{e^*}=t, M=m^*, \Para(t)]\alpha(t)+ \Pr[e^*\in \ALG | t=t_e, M=m^*, \neg\Para(t)](1-\alpha(t)) \\
&\quad\geq \bm{1}[m^*=t]\cdot (\varepsilon (p/t)+(1-\varepsilon)(p/t)^2)+ \bm{1}[p\leq m^*<t]\cdot \varepsilon (p/t)^2+\bm{1}[m^*<t]\cdot (1-\varepsilon)(p/t)^2 \\
&\qquad  + (p/t^2) \int_{p}^t \left(\bm{1}[s\leq m^*<t]\cdot \varepsilon  + \bm{1}[m^*<s] \cdot (1-\varepsilon) (p/s)(1-p/s)\right) \diff s.
\end{align*}
Now, we define $G_1$, $G_2(m^*)$ and $G_3$ by computing the the expression above for specific values of $m^*$.
If $m^*=t$, then
\begin{align*}
\Pr[e^*\in \ALG \mid t_{e^*}=t, M=t] \geq \varepsilon (p/t)+(1-\varepsilon)(p/t)^2 \triangleq G_1.
\end{align*}
If $m^*\in [p,t)$, then
\begin{align*}
&\Pr[e^*\in \ALG \mid t_{e^*}=t, M=m^*]\\ 
&\quad\geq \varepsilon(p/t)^2 +(1-\varepsilon)(p/t)^2 +  (p/t^2)\int_{p}^{m^*}\varepsilon \diff s+ (p/t^2)\int_{m^*}^t(1-\varepsilon) (p/s)(1-p/s)\diff s\\
&\quad=(p/t)^2 + \frac{\varepsilon p(m^*-p)}{t^2} + (1-\varepsilon)(p/t^2)\left(p^2\left( 1/t-1/m^*\right) + p\ln(t/m^*)\right)\triangleq G_2(m^*).
\end{align*}
If $m^*<p$, then
\begin{align*}
&\Pr[e^*\in \ALG \mid t_{e^*}=t, M=m^*]\\
&\quad\geq (1-\varepsilon)(p/t)^2+ (p/t^2)\int_{p}^t(1-\varepsilon) (p/s)(1-p/s)\diff s\\
&\quad=(1-\varepsilon)(p/t)^2 + (1-\varepsilon)(p/t^2)\left(p^2(1/t-1/p) + p\ln(t/p)\right)\\
&\quad=(1-\varepsilon) \frac{p^2(p+t\ln(t/p))}{t^3}\triangleq  G_3.
\end{align*}
Observe that $G_3=G_2(p)-\varepsilon (p/t)^2$, and $G_1=G_2(t)$. Therefore,
\[
\min_{m^*\leq t} \Pr[e^*\in \ALG \mid t_{e^*}=t,M=m^*] = \min\set{G_3, \min_{m^*\in [p,t]} G_2(m^*)}.\]
It is not difficult to check that the second derivative of $G_2(m^*)$ with respect to $m^*$ is $(1-\varepsilon)(m^*-2p)p^2/((m^*)^3t^2)$. Therefore, for $m^*\leq 2p$, the function $G_2(m^*)$ is concave in $m^*$ and thus its minimum over the interval $[p,t]$ is attained in $m^*=p$ or in $m^*=t$. To ensure that $m^*\leq 2p$ holds, we assume from now on that $p\geq {1}/{2}$. 
 Since $G_2(t)=G_1$ and $G_2(p)\geq G_3$, we obtain
\begin{align}
\Pr[e^*\in \ALG \mid t_{e^*}=t] &\geq \min_{m^*\leq t} \Pr[e^*\in \ALG \mid t_{e^*}=t,M=m^*] = \min\{G_3, G_1\}. \label{eq:final}
\end{align}

Based on the previous analysis, we are now ready to lower bound the probability-competitiveness of \textsc{Mixture} to prove Theorem \ref{thm:graphic-intro}\ref{thm:graphic-general}.

\begin{proof}[Proof of Theorem~\ref{thm:graphic-intro}\ref{thm:graphic-general}]
We choose $e^*\in \OPT(E)$ with arriving time $t=t_{e^*}\in [p,1]$. Let $\ALG$ denote the output of Algorithm~\ref{alg:final}.
From \eqref{eq:final}, we get that
\begin{align*}
\Pr[e^*\in \ALG]&\geq \int_p^1 \min\left(\varepsilon (p/t) + (1-\varepsilon)(p/t)^2, (1-\varepsilon) (p/t)^3(1+(t/p)\ln(t/p))\right)\diff t\\
&=p\int_{p}^1\min\left(\varepsilon/q+1-\varepsilon,(1-\varepsilon)(q-\ln (q))\right)\diff q,
\end{align*}
where the second equality holds by performing a change of variables $q={p}/{t}$.
Fix a value of $p\in [{1}/{2},1]$ and consider the two functions involved in the minimum inside he integral, that is, $K_1(\varepsilon,q)=\varepsilon/q + 1-\varepsilon,\text{ and }K_2(\varepsilon,q)=(1-\varepsilon)(q-\ln(q)).$
Note that for every given value $q> p$, both $K_1(\varepsilon,q)$ and $K_2(\varepsilon,q)$ are linear functions in $\varepsilon$.
Furthermore, we have $K_1(1,q)={1}/{q}>K_2(1,q)=0$. Thus, $K_1(\varepsilon,q)\leq K_2(\varepsilon,q)$ if and only if 
\begin{equation}
\varepsilon\leq \frac{q-\ln(q)-1}{q-\ln(q)-1+1/q}=R(q).\label{ineq:eps-final}
\end{equation}
The function $R(q)$ is decreasing in the interval $[{1}/{2},1]$, with $R(1)=0$.
If $\varepsilon>R({1}/{2})$, inequality \eqref{ineq:eps-final} is not satisfied for any $q\in [{1}/{2},1]$, and therefore $K_1(\varepsilon,q)>K_2(\varepsilon,q)$ for every $q\in [p,1]$.
If $\varepsilon \in [0,R({1}/{2})]$, there exists a unique value $q_{\varepsilon}\in [{1}/{2},1]$ such that $\varepsilon=R(q_{\varepsilon})$.
Then, 
\[\min(K_1(\varepsilon,q),K_2(\varepsilon,q)) = 
\begin{cases} 
K_2(\varepsilon,q) &\text{ if $\varepsilon>R({1}/{2})$,}\\
K_2(\varepsilon,q) &\text{ if $\varepsilon\le R({1}/{2})$ and $q> q_\varepsilon$,}\\
K_1(\varepsilon,q) &\text{ if $\varepsilon\le R({1}/{2})$ and $q\le q_\varepsilon$.}
\end{cases}\]
In particular, the lower bound for $\Pr[e^*\in\ALG]$ can be rewritten as
\begin{equation*}
p\int_p^1(1-\varepsilon)(q-\ln(q))\diff q=p(1-\varepsilon)(-p^2/2 - p + p\ln(p) + 3/2)
\end{equation*}
if $\varepsilon>R({1}/{2})$,
\begin{equation*}
p\int_p^1(1-\varepsilon)(q-\ln(q))\diff q=p(1-\varepsilon)(-p^2/2 - p + p\ln(p) + 3/2)
\end{equation*}
if $\varepsilon\le R({1}/{2})$ and $p\ge q_{\varepsilon}$, and
\begin{align*}
&p\int_{p}^{q_{\varepsilon}}(\varepsilon/q+1-\varepsilon)+p\int_{q_{\varepsilon}}^1(1-\varepsilon)(q-\ln(q))\diff q\\
&\quad=p\varepsilon\ln(q_{\varepsilon}/p)+(1-\varepsilon)(q_{\varepsilon}-p)p+p(1-\varepsilon)(-q_{\varepsilon}^2/2 - q_{\varepsilon} + q_{\varepsilon}\ln(q_{\varepsilon}) + 3/2) 
\end{align*}
if $\varepsilon\le R({1}/{2})$ and $p\le q_{\varepsilon}$.

To conclude, we solve an optimization problem in each of the three regions defined by these cases to find the optimal combination for $\varepsilon$ and $p$. For the case $\varepsilon>R({1}/{2})\approx 0.0881$, the maximum of this bound is attained when $p={1}/{2}$, yielding a maximum value of $\approx 0.2409$. When $\varepsilon\le R({1}/{2})$, to find the maximum, we set $\varepsilon=R(q)$ and obtain an explicit maximization problem in two variables with constraints ${1}/{2}\leq q\leq p\leq 1$ for the second region, and ${1}/{2}\leq p\leq q\leq 1$ for the third region, and the maximum can be computed numerically in both cases. In the former, we get an optimal value of $0.2409$. However, for the latter, the optimal solution is given by $p^{\star}=0.5$ and $q^{\star}\approx 0.8251$, with $\varepsilon=R(q^{\star})\approx 0.0141$, showing a probability-competitive ratio of at least $0.2504$.
\end{proof}
%%%%%%%%%%%%%%%%
\noindent{\bf Acknowledgements.} This research was supported by the Lend\"ulet Programme of the Hungarian Academy of Sciences -- grant number LP2021-1/2021, by the Ministry of Innovation and Technology of Hungary from the National Research, Development and Innovation Fund -- grant numbers ADVANCED 150556 and ELTE TKP 2021-NKTA-62, by Dynasnet European Research Council Synergy project -- grant number ERC-2018-SYG 810115, and by ANID-Chile -- grants FONDECYT 1231669, FONDECYT 1241846, BASAL Center for Mathematical Modeling FB210005, and ANILLO Information and Computation in Market Design ACT210005.

\bibliographystyle{abbrv}
\bibliography{references}

\appendix
\section{Forbidden Sets Technique through Labeling Schemes}
\label{app:qforbidden}
%%%%%%%%%%%%%%%%

We briefly describe the $q$-forbidden technique of Soto, Turkieltaub, and Verdugo~\cite{forbidden-paper} and show how to implement every $q$-forbidden algorithm via a labeling scheme. 
The $q$-forbidden algorithmic scheme of \cite{forbidden-paper} also starts by first sampling the elements arriving up to a given time $p$ and then accepting a subset of the improving elements. Specifically, for every $e^* \in \OPT$ arriving at a given time $t$, the algorithm identifies, for every time $s < t$, a set $\cF(\OPT(E_s),\OPT(E_t), e^*)$ of at most $q$ \emph{forbidden} elements such that if no improving element arriving at $s \in (p,t)$ is in $\mathcal{F}(\OPT(E_s),\OPT(E_t), e^*)$ then $e^*$ is added to the output set $\ALG$.

\begin{proposition}
For every matroid class $\cC$ that admits a $q$-forbidden algorithm $\ALG$, there exists a labeling scheme $\Lambda$ such that, for every $p \in (0,1)$ and $e^*\in \OPT$,
\[
\Pr[e^* \in \ALG] = \begin{cases}
-p \ln(p)           & \text{for $q = 1$,} \\
{(p-p^q)}/{(q-1)}   & \text{for $q \geq 2$.}
\end{cases}.
\]
Choosing the optimal sample size $p = p(q)$ yields a $c(q)$-probability-competitive ratio, where $(p(1), c(1)) = (1/e, 1/e)$ and $(p(q), c(q)) = \prn{q^{-1/(q+1)}, q^{-q/(q+1)}}$.
\end{proposition}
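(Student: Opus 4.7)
The plan is to lift any $q$-forbidden algorithm to our framework by encoding its forbidden sets directly into the labels of a carefully chosen labeling scheme, so that the event $e^* \in \ALG$ becomes a membership condition in a simple word language; the probability of that membership is then a direct Poisson computation via Lemma~\ref{lem:Poisson-label}.

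Fix $e^* \in \OPT(E)$ and a total order $\pi$ of $E$ with $\pi_1 = e^*$. I define the labeling scheme $\Lambda$ piecewise in time as follows. At every improving time $s \geq t_{e^*}$, use the induced labeling $\Lambda_\pi$: by Lemma~\ref{lem:improving-elems}, $e^* \in \OPT(E_s^+)$ at such times, so $e^*$ is the unique element receiving label $1$. At every improving time $s < t_{e^*}$, consult the $q$-forbidden set $\mathcal{F}(\OPT(E_s), \OPT(E_{t_{e^*}}), e^*)$, pad it inside $\OPT(E_s^+)$ up to exactly $q$ elements, assign the labels $\{1, \dots, q\}$ to the padded forbidden set, and assign the labels $\{q+1, \dots, r\}$ arbitrarily to the remaining elements of $\OPT(E_s^+)$. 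This is a legal labeling scheme because the definition permits the labels at time $s$ to depend on events in $(s, 1]$, in particular on $t_{e^*}$ and $\OPT(E_{t_{e^*}})$.

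Let $z$ be the improving word in $[p, 1]$ produced by $\Lambda$. By Lemma~\ref{lem:standardlabel}, $z$ contains a symbol $1$ iff $t_{e^*} \in [p, 1]$, and the first occurrence of $1$ in $z$ is the label of $e^*$. By construction no later position carries label $1$, and the suffix of $z$ after the first $1$ consists of the labels of the improving elements in $(p, t_{e^*})$, with a label in $[q]$ precisely when the corresponding improving element lies in the (padded) forbidden set. The $q$-forbidden guarantee therefore yields
$$z \in \cL := ([r]\setminus\{1\})^* \cdot 1 \cdot ([r]\setminus[q])^* \;\Longrightarrow\; e^* \in \ALG,$$
so $\Pr[e^* \in \ALG] \geq \Pr[z \in \cL]$. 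To evaluate the right-hand side, I apply the coloring property (Lemma~\ref{lem:Poisson-label}(iv)): after the change of variable $x = -\ln(y)$, the improving times carrying labels in $\{1\}$, $\{2, \dots, q\}$, and $\{q+1, \dots, r\}$ form mutually independent Poisson processes on $[0, L]$ with $L = \ln(1/p)$ and rates $1$, $q-1$, and $r-q$ respectively. Membership in $\cL$ is equivalent to: exactly one label-$1$ point exists in $[0, L]$, at some position $x^*$, and no label-in-$\{2, \dots, q\}$ point lies in $(x^*, L]$. Integrating,
$$\Pr[z \in \cL] = \int_0^L e^{-L}\cdot e^{-(q-1)(L - x^*)}\, dx^* = \frac{e^{-L} - e^{-qL}}{q-1} = \frac{p - p^q}{q-1}$$
for $q \geq 2$, while for $q = 1$ the same calculation collapses to $L e^{-L} = -p\ln(p)$ (equivalently, the $q \to 1$ limit).

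The proof concludes by a one-variable optimization of the above expressions in $p$, yielding the stated $p(q)$ and $c(q)$. There is no genuine obstacle in the argument: all the real work is front-loaded into the choice of labeling scheme. The one subtle point is that the forbidden set is, a priori, an abstract object, and must be realized as a size-exactly-$q$ subset of $\OPT(E_s^+)$ for the labels to be well-defined; this is handled by padding, which is safe because enlarging the forbidden set only strengthens the condition on $z$ and therefore only strengthens the implication $z \in \cL \Rightarrow e^* \in \ALG$.
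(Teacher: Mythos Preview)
Your proof is correct and follows essentially the same approach as the paper: define the identical labeling scheme (label $1$ for $e^*$ via $\Lambda_\pi$ on $[t_{e^*},1]$, labels $[q]$ for the forbidden set on $[p,t_{e^*})$), reduce $e^*\in\ALG$ to membership of the improving word in $([r]\setminus\{1\})^*1([r]\setminus[q])^*$, and compute the probability via the Poisson structure. The only cosmetic difference is that you evaluate the probability by a direct integral over the position of the unique label-$1$ point in the Poisson process, whereas the paper first restricts the word to the alphabet $[q]$ and sums over word lengths; both computations are routine and yield the same expression.
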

\begin{proof}
Let $e^*\in\OPT(E)$. Our labeling scheme is dynamic. We start by considering a labeling scheme $\Lambda_\pi$ induced by a total order $\pi$ of all elements of $E$ for which $\pi_1 = e^*$, and use it for the interval $[t_{e^*}, 1]$. For times $s < t_{e^*}$, we use a different labeling. By the definition of a $q$-forbidden algorithm, $e^*$ can be selected if no element in $\cF(E_s, E_{t_{e^*}},e^*) \subseteq \OPT(E_s)$ arrives in $[p,s)$, where $\abs{\cF(E_s, E_{t_{e^*}},e^*)}\leq q$. Our labeling scheme gives arbitrary labels from $[q]$ to the elements in $\cF(E_s, E_{t_{e^*}},e^*)$, and arbitrary labels from $[r] \setminus [q]$ to the remaining elements of $\OPT(E_s)$. In particular, any element with label greater than $q$ is not part of $\cF(E_s, E_{t_{e^*}}, e^*)$. Let us denote the labeling scheme thus obtained by $\Lambda$.

Consider the improving word $z$ in $[p, 1]$ given by $\Lambda$. By Lemma~\ref{lem:standardlabel}, the first symbol $1$ in $z$ represents the arrival of $e^*$. From that point on, going backwards in time, labels in $\{1, 2, \dots, q\}$ correspond to forbidden elements for $e^*$. Thus,
\begin{align*}
\Pr[e^* \in \ALG] \geq \Pr[z = a1b \text{ such that}\ a\in \prn{[r] \setminus 1}^*, b \in \prn{[r] \setminus [q]}^*].
\end{align*}
Instead of using the word $z$, we use the simpler restriction $z'$ of $z$ to the alphabet $[q]$. By Lemma~\ref{lem:Poisson-label}, $z'$ is a uniform random word in ${[q]}^*$ whose length distributes as a Poisson distribution with parameter $\lambda_q = q \ln\prn{{1}/{p}}$. We separate the analysis for $q = 1$ due to its simplicity, giving
\begin{equation}\label{eq:forbidden-1}
\Pr[e^* \in \ALG] \geq \Pr[z' = 1] = \frac{\lambda_1^1 e^{-\lambda_1}}{1!} = - p\ln(p).
\end{equation}
For $q \geq 2$, let $\lambda' = \lambda_q \cdot \prn{1 - 1/q}= (q-1) \ln\prn{1/p}$. Then,
\begin{align}
\Pr[e^*\in \ALG] &\geq \Pr\brk{z' = a'1 \midd a \in \prn{[q] \setminus \{1\}}^*} \nonumber \\
&= \sum_{k = 1}^{\infty} \frac{\lambda_q^k e^{-\lambda_q}}{k!} \prn{1 - \frac{1}{q}}^{k-1}  \frac{1}{q}\nonumber\\
&=\sum_{k = 1}^{\infty} \frac{\prn{\lambda'}^k e^{-\lambda'}}{k!} \frac{1}{q - 1} e^{\lambda' - \lambda_q} \nonumber \\
\label{eq:forbidden-2} &= \frac{1 - e^{-\lambda'}}{q-1} e^{\lambda'-\lambda_q} = \frac{1-p^{q-1}}{q-1} e^{-\ln\prn{1/p}} = \frac{p-p^q}{q-1}.
\end{align}
Optimizing~\eqref{eq:forbidden-1} and~\eqref{eq:forbidden-2}, we get that the optimal $p$'s are $p(1) = 1/e$ and $p(q) = q^{-1/(q+1)}$ for $q \geq 2$, yielding a probability-competitive ratio of $c(1) = 1/e$ and $c(q) = q^{-q/(q+1)}$ for $q \geq 2$.
\end{proof}
%We remark that the previous result gives an alternative proof of, for example, the $1/e$-competitive algorithm for transversal matroids \cite{kesselheim-transversal-mat-sec} (which are 1-forbidden).
%%%%%%%%%%%%%%%%
\section{Competitiveness for Small Ranks}\label{app:tables}

Proposition~\ref{lem:gi} gives an exact formula that can be seen as a polynomial in $p$ and $\ln(p)$ of total degree at most $2r$. Hence, it is easy to evaluate for small values of $r$. Table~\ref{tab:GreedyUniform} presents the values of $c(r,p)$, $p(r)=\arg\max\{c(r,p)\colon p\in (0,1)\}$ and $c(r,p(r))$ for small values of $r$.

\begin{table}[h]
\centering
\begin{tabular}{llll}
\toprule
\textbf{$r$} & \textbf{$c(r,p)$} & \textbf{$p(r)$} & \textbf{$c(r,p(r))$} \\
\midrule
1 & $-p\ln(p)$ & $1/e \approx 0.3678$ & $1/e \approx 0.3678$ \\
2 &$p  (3 - 3 p + 2 p \ln(p)) $& $0.3824$ &$0.4273$  \\
3 &$(p/8)  (19 - 19  p^2 + 30  p^2  \ln(p) - 18  p^2  \ln(p)^2)$& $0.3867$ & $0.4575$ \\
4 &$(p/81)  (175 - 175  p^3 + 444  p^3  \ln(p) - 
   504  p^3  \ln(p)^2 + 288  p^3  \ln(p)^3)$ & $0.3883$ & $0.4769$ \\
\bottomrule
\end{tabular}
\caption{\textsc{Greedy-Improving} is $c(r,p(r))$-probability-competitive for uniform matroids of rank $r$.}
\label{tab:GreedyUniform}
\end{table}

Next, we consider general laminar matroids. Using that  $\Pr[P(\lambda')\geq r]=1-\Pr[P(\lambda')<r]$ and the exact expression for a Poisson distribution taking a value of $k$ for $k\in \{0,\dots, r\}$, we get a formula for $a(r,p)$ that can be seen as a polynomial in $p$ and $\ln(p)$ of total degree at most $2r$, with coefficients that depend on $r$. Hence, it is easy to evaluate for small values of $r$. Table~\ref{tab:table-laminar} presents the values of $a(r,p)$, $p(r)=\arg\max\{a(r,p)\colon p\in (0,1)\}$ and $a(r,p(r))$ for small values of $r$.

\begin{table}[h!]
\centering
\begin{tabular}{llll}
\toprule
\textbf{$r$} & \textbf{$a(r,p)$} & \textbf{$p(r)$} & \textbf{$a(r,p(r))$} \\
\midrule
1 & $-p\ln(p)$ & $1/e \approx 0.3678$ & $1/e \approx 0.3678$ \\
2 &$p(2 - 2 p + p \ln(p))$& $0.4241$ &$0.3341$  \\
3 &$(p/8) (11 - 11 p^2 + 14 p^2 \ln(p) - 6 p^2 \ln(p)^2)$& $0.4490$ & $0.3225$ \\
4 &$(p/81) (94 - 94 p^3 + 201 p^3 \ln(p) - 180 p^3 \ln(p)^2 + 72 p^3 \ln(p)^3)$& $0.4629$ & $0.3169$\\
\bottomrule
\end{tabular}
\caption{\textsc{Greedy-Improving} is $a(r,p(r))$-probability-competitive for laminar matroids of rank $r$.}
\label{tab:table-laminar}
\end{table}
\end{document}